\newif\ifnotes
\let\cite\citeyear
\newcommand{\g}{$T$\xspace}
\newcommand{\m}{$M$\xspace}
\newcommand{\mo}{mechanism operator\xspace}
\newcommand{\mpc}{multiparty contract\xspace}
\newcommand{\ic}{initial contract\xspace}
\newcommand{\sel}{second-leg\xspace}
\newcommand{\fl}{first-leg\xspace}
\newcommand{\bs}{balance-sheet\xspace}
\newcommand{\cc}{central clearing\xspace}
\newcommand{\mln}{multilateral netting\xspace}
\newcommand{\rtm}{repo trading mechanism\xspace}
\newcommand{\fa}{financial asset\xspace}
\newcommand{\bd}{broker-dealer\xspace}
\newcommand{\scu}{security\xspace}
\newcommand{\scus}{securities\xspace}
\DeclareMathAlphabet{\mathpzc}{OT1}{pzc}{f}{it}
\DeclareMathOperator*{\argmax}{argmax}
\theoremstyle{plain}
\newtheorem*{proposition****}{Proposition 2: Comparative Balance-Sheet Impacts}
\newtheorem*{proposition***}{Proposition 1: Equivalence of Payoffs}
\newtheorem*{proposition*}{Proposition 2: Equal Attainable Trade Volume}
\newtheorem*{proposition**}{Proposition 3: Maximal Information}
\newtheorem*{observation*}{Observation 2: Payoffs when an Agent Defaults}
\newtheorem*{theorem*}{Theorem 8.1}
\newtheorem*{theorem**}{Theorem 8.2}
\newtheorem*{theorem***}{Theorem: A Computational Bound}
\newtheorem*{corollary*}{Corollary 1}
\newtheorem*{corollary**}{Corollary 2}
\newtheorem*{corollary***}{Corollary}
\newtheorem*{lemma*}{Lemma: Equal Flow Across Cuts}
\newtheorem*{lemma**}{Lemma: Flow Decomposition}
\newtheorem*{definition*}{Definition 1: Chains}
\newtheorem*{definition**}{The Replacement Repo Contract Guarantee Structure}
\newtheorem*{definition***}{Definition 2: Cycles}
\definecolor{rune}{HTML}{4A6672}
\title{RepoMech: A Method to Reduce the Balance-Sheet Impact of Repo Intermediation}
\author{Daniel Aronoff\thanks{Research Affiliate, MIT Department of Economics}, Robert Townsend\thanks{Elizabeth and James Killian Professor of Economics, MIT Department of Economics} and Madars Virza \thanks{Research Scientist, MIT Media Lab Digital Currency Initiative} }
\begin{document}
\maketitle

\begingroup\renewcommand\thefootnote{}\footnotetext{\par \vskip 1em \noindent We wish to acknowledge LeAnn Tai for writing the code that implements the repo trading mechanism, and for preparing the visual representations in the paper, which are generated from the code; and Sriram Rajan for the many conversations that helped us to clarify our ideas. All errors are our responsibility.}

\addtocounter{footnote}{0}\endgroup

\pagenumbering{roman}

A repo trade involves the sale of a security coupled with a contract to repurchase at a later time. Following the 2008 financial crisis, accounting standards were updated to require repo intermediaries, who are mostly banks, to increase recorded assets at the time of the first transaction. Concurrently, US bank regulators implemented a supplementary leverage ratio constraint that reduces the volume of assets a bank is allowed record. The interaction of the new accounting rules and bank regulations limits the volume of repo trades that banks can intermediate. To reduce the balance-sheet impact of repo, the SEC has mandated banks to centrally clear all Treasuries trades. This achieves multilateral netting but shifts counterparty risk onto the clearinghouse, which can distort monitoring incentives and raise trading cost through the imposition of fees. We present RepoMech, a method that avoids these pitfalls by multilaterally netting  repo trades without altering counterparty risk.

\pagebreak
\tableofcontents
\pagebreak

\newpage
\pagenumbering{arabic}
\onehalfspacing

\section{Introduction}

Repurchase agreements (repos) involving U.S. Treasury securities are a cornerstone of short-term funding markets and Treasury market liquidity.\footnote{Source: Primary Dealer Statistics on repo transactions, Federal Reserve Bank of New York plus Federal Reserve overnight reverse repo facility.} The US Treasuries repo market is intermediated by bank affiliate \bd{s}.\footnote{ \citep{Kahn2021}} A repo trade involves a commitment to the sale of a security (the ``first-leg'') and the repurchase of the security at a a later time (the ``\sel'').  In the US, an organization called the financial accounting standards board (``FASB'') make the rules that govern, inter alia, the recording of \bs changes from the \fl transaction. Following the 2008 global financial crisis, FASB accounting standards were updated to require repo intermediaries to increase recorded assets. At the same time, a supplementary leverage ratio (``SLR'') was added to US bank regulations, which effectively reduced the volume of assets a bank could record (for a given amount of capital). The interaction of these accounting and regulatory changes – more repo assets recorded on the balance sheet at the \fl and a leverage cap on total assets – has materially constrained the capacity of bank-affiliated \bd{s} to intermediate repo markets. A repo trade now causes a larger increase in recorded assets at the \fl, which pushes the bank closer to its regulatory limit, than was the case before the changes. This is so even when the bank acts as a matched-trade intermediary with offsetting repo loans and borrowings (sales and purchases at each leg). Academic analyses have linked these constraints to reduced market-making and liquidity in repo and other fixed-income markets. For example, Duffie \cite{Duffie2017} argues that post-crisis leverage requirements have raised the cost of repo intermediation and contributed to occasional stresses in U.S. money markets.\footnote{Similar concerns have been expressed by the Group of 30, a body composed of academics and industry participants \citep{G30-2,Group-of-30}} 

Responding to these concerns, in December 2023, the U.S. Securities and Exchange Commission (``SEC'') adopted a rule requiring banks to centrally clear, inter alia, all of their Treasuries repo trades. Central clearing reduces \bs impact through  multilateral netting. When trades are novated to a central counterparty (the ``CCP''), a dealer’s repo borrowing and lending with multiple counterparties can be netted against each other on the CCP’s books, leaving only a single net exposure per dealer. This netting can dramatically shrink gross balance-sheet exposures. To achieve this, central clearing reallocates counterparty credit risk to the CCP. By interposing a CCP, the new framework concentrates what used to be diverse bilateral risks onto a single entity  This concentration raises two related concerns. One is that the CCP is a central nexus of risk. The other is that agents no longer have incentives to monitor and evaluate the credit risk of their initial contract counterparties.

\subsection{An alterative method}

We present RepoMech, a repo trading mechanism that reduces \bs  impact by at least the same amount as \cc without introducing a new party and without altering counterparty credit risk. Our repo mechanism is an adaptation of the method for achieving \mln in Aronoff \cite{Aronoff2025mech}. It works as follows.

\textbf{Balance-sheet reduction} The \rtm transforms the network formed by initial \sel repo contracts into a set of chains and cycles on which the traded objects flow along edges between nodes, which represent initial contracting agents. The initial contracts are terminated and replaced by  multilateral contracts that net trades on each chain and cycle. A maximal volume of multilateral netting is achieved \citep{Aronoff2025mech}. The netting reduces the \bs impact of the matched-trades - securities inflow equals securities outflow - of intermediaries from the value of the \fl sale proceeds to the net profit earned by the intermediary at each leg (the ``total intermediation margin'').\footnote{To be precise, the \fl margin adjusts assets and the \sel margin is recorded as an asset if positive and a liability otherwise.} Ordinarily, the latter will be an order of magnitude lower than the former.

\textbf{Counterparty risk preservation} A crucial property of the decomposed network is that object flows along edges between nodes are preserved. For initial repo counterparties $i$ and $j$, their \sel trade is partitioned and assigned to chains and cycles. For each assignment, $i$ and $j$ are  neighbors connected by an edge and the total flow of objects between them is unchanged. When a node fails to perform, the affected chain or cycle is further decomposed into a set of chains with the defaulting node in a bilateral contract with the neighboring node with whom it originally contracted.  The new contract between the counterparties includes the flow of objects on the edge that connects them on the affected chain or cycle. No new counterparty exposures are created, and no initial contract counterparty exposures are lost. This ensures a unique counterparty risk preservation feature; (i) if a party defaults, the  impact is borne only by the initial contract counterparty to whom the defaulting party was obligated to send the object and (ii) no party is exposed to a default by another party with whom it did not initially contract. 

\subsection{Related Literature}

Our work relates to a body of work that examines how accounting standards treat repos  and the implications for balance-sheet reporting. in particular the implications of accounting for the \fl as a final sale and the \sel as a forward contract, versus treating the entire trade as a secured financing. The latter implies a larger \bs impact than the former. Chang et.al. \cite{Chang-Repo105} provide a detailed post-mortem of Lehman’s Repo 105 transactions. Chircop et.al. \cite{Chircop2012} debate whether repos should be accounted as \fl final sales or secured financings, ultimately siding with the view that treating repos as financings provides a more faithful representation. Post-2008 FASB reforms (e.g. Accounting Standards Update 2011-03) largely resolved this debate by requiring repo trades to be accounted as secured financings with limited exceptions. Subsequent commentary by practitioners (\citep{Christodoulou2010,Pounder2011}) and accountants \citep{Hartwell-repo-105} describe how secured financing accounting  brought most repo assets onto balance sheets, preventing the kind of temporary balance-sheet “shrinkage” that Lehman and others had engineered. At the same time, researchers have evaluated the impact of bank capital and leverage regulations on repo activity. The introduction of leverage ratio requirements under Basel III has been widely cited as a key post-crisis development \citep{BIS2010}. Duffie \cite{Duffie2017} argues that the Supplementary Leverage Ratio (``SLR''), by penalizing low-risk, high-volume activities like repo intermediation, has reduced market liquidity and made banks less willing to intermediate these trades. Related empirical work shows that banks that intermediate the Treasuries repo market are operating close to the SLR lower bound \citep{Cochran2023} and that banks facing tighter leverage constraints cut back balance sheet-intensive positions around regulatory reporting dates, contributing to repo rate spikes \citep{munyan2015}. The link between leverage regulations and repo market capacity was vividly illustrated in March 2020, when surging demand for repo liquidity met strained dealer balance sheets; official reports and policymakers (Group of Thirty \cite{G30-2} and Fed \cite{FederalReserve2020}) noted that leverage constraints impeded dealers’ ability to absorb Treasuries, prompting temporary relief measures.

Finally, our paper speaks to the emerging literature on central clearing and its effects on repo markets. As regulators have promoted central clearing for repos, researchers have begun to analyze the benefits and risks of this shift. Central clearing achieves multilateral netting, which can relieve balance-sheet pressure on dealers – a point quantified by several studies. For instance, Hempel et al. \cite{Hempel2023} document that dealers already bilaterally net a large share of their bilateral repo positions where possible (by structuring offsetting trades), and they prefer the bilateral repo segment in part to take advantage of flexible margin and haircut terms. Kahn and Olson (2021) examine the participation in cleared repo and find that only the largest dealers and cash investors directly use clearing, while many smaller firms remain outside, suggesting barriers to access. Bowman et.al. \cite{bowman2024balancesheet} provide new evidence on the limited impact of central clearing on bank leverage ratios. They show that a significant portion of bilateral Treasury repo is already internally netted and thus would not further reduce balance-sheet usage even if moved to a CCP. On the other hand, Copeland and Kahn \cite{CopelandKahn2024SponsoredRepo} find that dealers do turn to centrally cleared repo (via sponsored clearing services) when their balance sheet space becomes scarce – for example, at quarter-end or when Treasury issuance surges. This behavior underscores that clearing’s netting benefits have tangible value for dealers under stress, but it also implies that dealers weigh those benefits against the costs of clearing (such as margin requirements and fees). A number of policy papers debate the net systemic effects of broad clearing mandates. The Treasury Market Practices Group \cite{TMPG-repo} and the Group of Thirty (2021) \cite{Group-of-30} have both recommended expanding central clearing in Treasury markets to bolster resilience. At the same time, industry participants have raised concerns: for example, Wuerffel \cite{Wuerffel2024} argues that \cc will raise trading costs which may shrink the volume of repo trades. 

Our work contributes to this debate by suggesting a third path that captures the balance-sheet netting advantages of clearing without altering counterparty risk or concentrating risk on a central party.

\subsection{Roadmap}

The remainder of the paper is organized as follows. Section \ref{sec:Repo Nomenclature} introduces key repo nomenclature and definitions that will be used throughout the analysis. Section \ref{sec:Repo Accounting Rules and Leverage Regulations} reviews U.S. repo accounting rules and bank leverage regulations in detail, contrasting the pre-2008 framework with post-crisis reforms and explaining how their interaction constrains repo intermediation. Section \ref{sec:RepoMech} presents the design of RepoMech, illustrating how it works through the chaining of repo transaction and the replacement of contracts to achieve multilateral netting. Section \ref{sec:Accounting and Legal Features of RepoMech} discusses the accounting and legal treatment of the proposed mechanism, explaining how it achieves a reduction in \bs impact. Section \ref{sec:Comparing RepoMech to Central Clearing} compares RepoMech to central clearing, showing how each affects balance-sheet exposures under current regulations and arguing that our mechanism can attain similar netting benefits without reallocation of counterparty risk. Section \ref{sec:conclusion-2} concludes with policy implications and suggestions for further research. Finally, Appendix \ref{app:Repo Accounting Rules} provides a detailed discussion of repo accounting rules.

\section{Repo Nomenclature}
\label{sec:Repo Nomenclature}

The following is a glossary of repo terms and notation we use in this paper.

\textbf{A Repo trade} A repo trade between two agents is comprised of two contracts entered into at the same time, a first-leg contract that closes immediately and a second-leg contract that closes at a later date. The first-leg of repo is a transaction in which agent $i$ purchases some number of units of the \scu, denoted $T$, from agent $j$ for unit price $p^{1}_{i\rightarrow j}$ with money, denoted $M$. The second-leg of repo is the forward transaction which occurs at a later date where agent $j$ (re)purchases $T$ from agent $i$ for unit price $p^{2}_{j\rightarrow i}$ with $M$. The repo rate is $r_{ij} = (p^{2}_{j\rightarrow i} - p^{1}_{i\rightarrow j})/p^{1}_{i\rightarrow j}$. This is the rate of return earned by $i$ and paid by $j$. $\{T_{ij},p^{1}_{i\to j}, p^{2}_{j\to i}\}$ denotes the elements of the repo trade between agents $i$ and $j$. By convention, the agent that receives money at the first-leg is the "repo borrower" and the other agent is the "repo lender". In this example agent $j$ is the repo borrower and agent $i$ is the repo lender.

\textbf{Treasuries classes} When the \scu is Treasures, trades are often partitioned into classes of Treasuries denoted by CUSIP, which are Treasuries that share the same coupon, issuance and maturity dates. Consequently, $T$ is a member of a class of interchangeable objects, meaning that the repo borrower is required to provide a $T$ from the designated set in the second-leg transaction. The alternative case is called "general collateral repo" where a borrower can repay with Treasuries from a set of designated CUSIP's. 

\textbf{Repo haircut} The repo haircut is the discount below the secondary market price of collateral, denoted $p_{T}$, paid by the first-leg buyer; $p^{1}_{i\to j} < p_{T}$.

\textbf{Some terminology}. \textit{\scu} is interchangeable with "Treasuries" (a specific type of \scu) and $T$. \textit{Financial objects} are money and the \scu. \textit{Nonperformance} is a failure to send financial objects required by a contractual obligation. \textit{Node} is a representation of an agent on a graph. A node that is formed by splitting an agent's inflows and outflows of second-leg $T$ is the child of the agent. A node formed by splitting a child node is a grandchild of the agent. We refer to "agent" and "node" interchangeably. \textit{DVP} is delivery versus payment, which applies to repo trades that requires the delivery of the \scu in exchange for payment of money, where the trade protocol specifies that each counterparty only receives its financial object when the other receives its financial object.

\textbf{Rehypothecation} The $T$ collateral is rehypothecatable, which means that a repo lender can sell the $T$ it receives from a counterparty at the first-leg. Rehypothecation enables the movement of collateral along a repo chain. At the \fl the \scu sent by a repo borrower is passed through one or more intermediaries until it reaches the repo lender. The repo lender sends money to the repo borrower in the opposite direction (with intermediaries possibly adding or subtracting amounts). At the \sel the flows are reversed. The intermediaries are simultaneously repo borrowers and lenders. We denote the ultimate repo lender a $MM$ (think money market fund), the ultimate repo borrower $RM$ (think risk manager e.g. a hedge fund) and the intermediary $BT$ (think balanced trade). Figure \ref{fig:First and Second-Leg Repo Chain} displays a rehypothecation chain with movement of financial objects $M$ and $T$ at each leg. Note that the volume of $T$ is fixed, but the volume of $M$ is variable. For example, the $M$ that moves from $BT_{h}$ to $BT_{i}$ at the \fl is $Tp^{1}_{h\to i}$, and the $M$ that moves in the opposite direction at the \sel is $Tp^{2}_{i\to h}$.

\textbf{Intermediation} We define repo intermediation as the activity of rehypothecating a fixed volume of the \scu (``matched-trades''). In Figure \ref{fig:First and Second-Leg Repo Chain} agents $BD_{h}, BT_{i}$ and $BT_{j}$ are engaged in intermediation. We use the terms ``intermediation'' when referring to the activity and ``intermediary'' when we refer to the entity (which might also be engaged in excess repo lending or borrowing, which is not intermediation).

\begin{figure}[H]
\centering
{\footnotesize{
 \textbf{First-leg}: $MM \underset{\text{$BT_{h}$ sale of $T$ to $MM$}}{\underbrace{
   \begin{aligned}
        T & \longleftarrow\\
        \longrightarrow & M
    \end{aligned}
}} BT_{h}\underset{\text{inter-dealer sale of $T$}}{\underbrace{
    \begin{aligned}
        T & \longleftarrow\\
        \longrightarrow & M
    \end{aligned}
}} BT_{i}\underset{\text{inter-dealer sale of $T$}}{\underbrace{
    \begin{aligned}
        T & \longleftarrow\\
        \longrightarrow & M
    \end{aligned}
}} BT_{j}\underset{\text{$MM$ sale of $T$ to $BT_{j}$}}{\underbrace{
    \begin{aligned}
        T & \longleftarrow\\
        \longrightarrow & M
    \end{aligned}
    }} RM$
}}
\end{figure}

\begin{figure}[H]
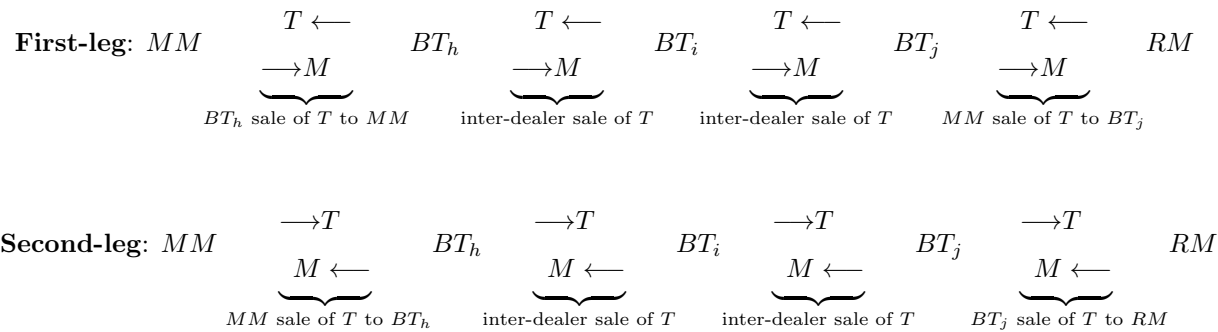

\centering
{\footnotesize{
\textbf{Second-leg}: $MM\underset{\text{ $MM$ sale of $T$ to $BT_{h}$}}{\underbrace{
   \begin{aligned}
        \longrightarrow & T\\
        M &\longleftarrow
    \end{aligned}
}} BT_{h}\underset{\text{inter-dealer sale of $T$}}{\underbrace{
    \begin{aligned}
        \longrightarrow & T\\
        M & \longleftarrow
    \end{aligned}
}} BT_{i}\underset{\text{inter-dealer sale of $T$}}{\underbrace{
    \begin{aligned}
        \longrightarrow & T\\
        M & \longleftarrow
    \end{aligned}
}}BT_{j}\underset{\text{ $BT_{j}$ sale of $T$ to $RM$}}{\underbrace{
    \begin{aligned}
        \longrightarrow & T\\
        M & \longleftarrow
    \end{aligned}
    }}RM$
\caption{First and Second-Leg Repo Chain}
\label{fig:First and Second-Leg Repo Chain}
}}
\end{figure}

\section{Interaction of Repo Accounting Rules and Bank Leverage Regulations}
\label{sec:Repo Accounting Rules and Leverage Regulations}

In this section we explain how the interaction of reforms to repo accounting rules and bank leverage regulations enacted after the 2008 financial crisis limit the volume of repo trades that bank affiliated broker-dealers can intermediate. Repo accounting rules were reformed to close loopholes that enabled an initial owner of a \scu to conceal its ownership. The loophole was closed by requiring the initial owner of \scus involved in a repo trade to retain the asset on its balance-sheet after the \fl sale. However, the change in the accounting treatment of repo trades requires intermediaries on the repo chain through which the \scus flow to increase their recorded assets.

Around the same time, bank leverage regulations were reformed to increase the minimum amount of capital, as a percentage of assets, that a bank is required to hold. The objective was to ensure a bank has adequate capacity to absorb losses on loans and other assets. One new leverage regulation in particular affects the repo market. It is the supplementary leverage ratio ("SLR") regulation, which sets a lower bound on the ratio of capital to unweighted  balance-sheet assets and off-balance-sheet derivatives exposures.\footnote{Walter \cite{Walter2019} reviews, inter-alia, the changes to bank capital regulations since the 2008 financial crisis.} In combination, the changes in accounting rules and leverage regulations have reduced the maximum volume of  repo that bank affiliated  broker-dealers can intermediate.

Section \ref{subsec:Repo accounting rules} provides a comparison of the balance-sheet impact of pre- and post-reform accounting rules on repo intermediation without references to the underlying rules themselves. Appendix \ref{app:Repo Accounting Rules} contains a more detailed discussion of repo accounting, with references to the applicable rules. Section \ref{subsec:Bank Leverage regulations on repo} explains the SLR and its importance to repo. Section \ref{Unintended Consequences} displays the effect of the interaction between the accounting and banking reforms. 

\subsection{Repo accounting rules}
\label{subsec:Repo accounting rules}

Prior to the 2008 global financial crisis, an agent could, if certain conditions were met, treat the \fl transaction as a final sale and the \sel transaction as a forward derivative contract. Subsequent updates to FASB accounting rules disallowed this practice and required that repo trades be treated as secured financings.

\subsubsection{Pre-reform accounting rules}

Prior to the reforms, there were two ways an agent could conceal its ownership or risk exposure to \scus by selling it at the \fl of a repo trade. One way applied when the \sel was scheduled before the maturity date of the \scus. In that case it was possible to treat the \fl of a repo trade as a final sale and the \sel as a forward sale (at the time of the \fl). Under final sale repo accounting the \scus of the seller are removed from the balance-sheet and the money received is added. The opposite holds for the purchaser. The \fl margin is the difference between them. The \sel transactions is recorded at fair-market-value ("FMV").\footnote{FMV is the \sel margin modified by counterparty credit risk and value-at-risk.} Notably, the \scu is removed from the balance-sheet of the \fl seller and recorded on the balance-sheet of the \fl buyer. Since the \scu is recorded onto, and then removed from, the balance-sheet of each agent as it moves across the repo chain at the \fl, a repo intermediary does not retain or record the asset in its balance-sheet. The balance-sheet impact on an intermediary is the sum of its first and \sel borrowing and lending margins (the "total intermediation margin"). Figure \ref{fig:First-Leg Balance-Sheet Impact of Repo Intermediation: Pre-Reform}  displays the pre-reform accounting of repo intermediation of volume $T$ of the \scu for agent $BT_{i}$, where $BD_{i}$ purchases a quantity of the \scu $T$ from its left neighbor $BD_{h}$ and sells it to its right neighbor $BD_{j}$ at the \fl.  

\begin{figure}[H]
\begin{center}
\text{$BT_{i}$ Intermediation of $T$ on the Repo Chain in Figure \ref{fig:First and Second-Leg Repo Chain}}
\vskip5pt
\begin{tabular}{l|l}
\hline
\multicolumn{1}{c|}{\textbf{Liabilities}} & \multicolumn{1}{c}{\textbf{Assets}} \\
\hline
&\underline{First-leg margin}\\
& $<$\small{sale to $BD_{h}$}$>$\\
& + $Tp^{1}_{h\to i}$ \; "money"\\
& - $Tp_{T}$ \; "financial asset"  \\
&$<$\small{purchase from $BT_{j}$}$>$\\
& - $Tp^{1}_{i\to j}$\\
& $+ TP_{T}$\\
&\underline{Second-leg margin}\\
& + FMV purchase from $BT_{h}$\\
& + FMV sale to $BT_{j}$\\
\hline
& $\Delta A \approx BT_{i}$'s total intermediation margin\\

\end{tabular}
\end{center}
\caption{First-Leg Balance-Sheet Impact of Repo Intermediation: Pre-Reform}
\label{fig:First-Leg Balance-Sheet Impact of Repo Intermediation: Pre-Reform}
\end{figure}

The other way applied when the \sel was scheduled at the maturity date of the assets and the debtor paid the repo lender directly. This is called repo-to-maturity. In that case the \fl seller indemnified the purchaser against a default on the retirement of the \scus. The pre-reform treatment was to record a \fl final sale and to keep the indemnification off balance-sheet.

Two notable instances of perceived abuse led to changes in accounting rules to prevent this practice. In one instance, Lehman Brothers devised a transaction structure, called Repo 105, which it employed around financial disclosure dates for several years prior to its 2008 bankruptcy. The maneuver enabled Lehman to conceal billions of dollars of subprime mortgage exposure by recording the sale of subprime securities at the \fl as a final sale. The trade was timed so that the financial reporting date fell in-between the first and \sel{s}, which enabled Lehman to report a balance-sheet that did not contain the traded subprime securities  \citep{Pounder2011}.\footnote{It is unclear how prevalent was the use of final-sale accounting prior to the reforms. In March 2010 the SEC sent out a “Dear CFO” letter to 19 banks and financial institutions in the US asking how they accounted for repos. The responses indicated that the surveyed institutions did not use final-sale accounting for a majority of their repo trades. However, the survey did not cover the entire universe of repo participants and the respondents did not provide an exact percentage breakdown of repo accounting treatment \citep{Christodoulou2010}.} In another instance, MF Global concealed its exposure to billions of dollars of low-rated sovereign debt by structuring repo-to-maturity trades.  MF Global was required to cover any shortfall caused by a default on the sovereign debt, but the indemnification was off balance-sheet and unreported. Appendix \ref{app: Lehman and MF Global Repo Strategies} discusses the strategies employed by Lehman and MF Global.

\subsubsection{Post-reform accounting rules}

 The accounting reforms that were enacted subsequent to the perceived Lehman and MF Global abuses require that all repo trades are treated as secured financings whereby the \scu remains on the balance-sheet of the initial owner even after it is sold at the \fl. This changes the \fl balance-sheet impact of repo intermediation. The initial owner records the \fl price as a cash inflow, but does not deduct the value of the \scu it sold. It also records a liability equal to the \sel repurchase price. The repo lender replaces its \fl cash outflow used to purchase the \scu with a receivable of equal value. Otherwise, its \bs does not change\footnote{The gap in price between first and \sel is treated as an interest accrual which does not appear at the first-leg.}. The \bs impact on an intermediary is the combination of the two, which reduces in value terms to the impact of its borrowing. Compared to pre-reform, this shift from total intermediation margin to first-leg sale price increase in asset value, constitutes an order of magnitude increase in assets associated with repo intermediation. Figure \ref{fig:First-Leg Balance-Sheet Impact of Repo Intermediation: Post-Reform} displays the post-reform accounting of repo intermediation of volume $T$ of the \scu for agent $BD_{i}$ on the repo chain in Figure \ref{fig:First and Second-Leg Repo Chain}.

\begin{figure}[H]
%\begin{figure}[H]
\begin{center}

\text{$BT_{i}$ Intermediation of $T$ on the Repo Chain in Figure \ref{fig:First and Second-Leg Repo Chain}} \vskip5pt
\begin{tabular}{l|l}
\hline
\multicolumn{1}{c|}{\textbf{Liabilities}} & \multicolumn{1}{c}{\textbf{Assets}} \\
\hline                    
+ $Tp^{2}_{i\to h}$ & + $Tp^{1}_{h\to i}$ \\                   
                   \hline
                   & $\Delta A = + Tp^{1}_{h\to i}$
\end{tabular}
\end{center}
\caption{First-Leg Balance-Sheet Impact of Repo Intermediation: Post-Reform}
%\label{fig:First-Leg Balance-Sheet Impact of Repo Intermediation: Post-Reform}
%\end{figure}
\label{fig:First-Leg Balance-Sheet Impact of Repo Intermediation: Post-Reform}
\end{figure}

\subsection{Bank leverage regulations}
\label{subsec:Bank Leverage regulations on repo}

Bank holding companies and their deposit-taking subsidiaries are independently subject to capital and leverage regulations. The rules do not directly apply to broker-dealer subsidiaries of banks who trade in the repo market. Consequently, we focus on the impact of repo trading on the consolidated balance-sheet of the bank holding company affiliate of the broker-dealer which we refer to as a "bank".  

\subsubsection{Pre-reform capital rules}

For many decades banks operating in the U.S. have been subject to a number of minimum capital ratio requirements (CRS \cite{CRS2023} Table 3). The numerator of each ratio is a different measure of bank capital and denominator of each is the same risk weighted assets ("RWA"). A risk weight, which is encoded in regulation, is the percentage of an asset's value that is represented in the denominator. A U.S. Treasury is assigned a risk weight of $0\%$. The RWA is the asset value multiplied by the risk weight. This implies that U.S. Treasuries do not appear in the denominator of any risk-weighted capital ratio. In addition, banks have been subject to a minimum leverage ratio requirement, where the denominator is composed of unweighted assets (i.e. assets enter at their values recorded on the balance-sheet), which includes U.S. Treasuries. 

\subsubsection{Post-reform capital rules}

The 2008 financial crisis elicited regulatory reforms designed to remedy perceived flaws in banking regulations that contributed to the crisis. A key reform was Basel III, which was enacted by the Bank for International Settlements \citep{BIS2010} and adopted by U.S. banking regulators. A key aim of Basel III was to reduce the risk of bank insolvency by adding off balance-sheet exposures, such as derivatives, to the denominator of the leverage ratio ( \cite{Walter2019}. The SLR places a lower bound, denoted by $\underline{L}$, on the ratio of bank capital to unweighted assets plus off balance-sheet exposures, of 3\% with an additional 2\% for large globally systemically important banks (''GSI'''s) \cite{BIS2010}.

\begin{center}
SLR: capital/(assets + exposures) $\geq$ 3\% + 2\% for GSIB's = $\underline{L}$
\end{center}

An important observation is that the SLR is more restrictive than the pre-existing minimum leverage ratio. The lower bound of the leverage ratio in each case is comparable, at 3-5\%. However, the inclusion of exposures in the denominator of the SLR implies that more capital is required to achieve a given ratio under the SLR.  In recent years a number of banks with the largest share of repo intermediation have been operating near their SLR lower bounds. This has prompted concern that the SLR regulation has placed a binding constraint which is limiting the capacity of their broker-dealer affiliates to intermediate the US Treasuries cash and repo markets \citep{Duffie2017}Figure 2.1.1 ). Economist Darrell Duffie expressed concern over the restrictive impact of the SLR on intermediation in his 2018 Baffie Lecture.

\begin{quote}
The concern is instead that the amount of intermediation provided by banks to low-risk asset markets has become inefficiently low...one can infer from Figure 2.1.1 that the largest U.S. dealer banks must carefully consider the impact of the leverage ratio rule (SLR) on their minimum capital levels when deciding how much of their balance sheet to allocate to safe asset intermediation (Duffie \cite{Duffie2017}  Chapter 2).
\end{quote}

The Board of Governors of the Federal Reserve System expressed concern over the restrictive impact of the SLR on intermediation in the Congressional Record in 2020.

\begin{quote}
Large holding companies have cited balance sheet constraints for their broker-dealer subsidiaries as an obstacle to supporting the Treasury market. Specifically, the \textbf{supplementary leverage ratio} can limit holding companies' ability to own Treasuries outright \citep{FederalReserve2020}.
\end{quote}

The Group of 30, which is comprised of leading academic, regulator and financial industry leaders cite the SLR's impact on the repo market as a fundamental source of financial dysfunction.

\begin{quote}
With leverage ratios, especially the SLR, currently the binding regulatory constraint on capital allocation at many of these banks, they are discouraged from allocating capital to market intermediation in the Treasury markets and especially in the Treasury repo markets, the liquidity of which is critical to all dealers in Treasury securities and other leveraged providers of Treasury market liquidity.\citep{G30-2}
\end{quote}

\subsection{Intermediation constraints}
\label{subsec: Intermediation constraints}

The interaction of the post-reform increase in balance-sheet impact of repo intermediation with the tightened leverage rule has two interrelated effects. One effect is that repo intermediation pushes downward the leverage ratio and the SLR toward their lower bounds at a faster rate compared to pre-reform. This follows from the order of magnitude larger balance-sheet impact of repo intermediation post-reform (Figure \ref{fig:First-Leg Balance-Sheet Impact of Repo Intermediation: Post-Reform}) compared to pre-reform (Figure \ref{fig:First-Leg Balance-Sheet Impact of Repo Intermediation: Pre-Reform}). The other effect is that banks active in the repo market are operating closer to their SLR lower bound and must increase capital devoted to repo in order to increase intermediation volume (Duffie \cite{Duffie2017}  Chapter 2). The imposition of the SLR on top of the pre-existing capital and leverage ratio lower bounds can be viewed as a marginal increase in the lower bound of a composite capital and leverage ratio. Looked at this way, it is natural to ask what effect a marginal increase in the lower bound has on the volume of repo intermediation a bank will undertake. We evaluate each effect. 

% In Appendix \ref{sec: Client demand and supply of $t$}we show the relation holds when capital is adjustable.

% \subsection{Balance Sheet and Leverage Impact of a Repo Trade}

% A $BD$ intermediates repo trades by moving money and collateral from one counterparty to another along a repo chain. At the \fl, the assets and liabilities of a $BD$ increase by approximately the volume of the two repo transactions it enters into. For example, suppose that $BD_{j}$ enters into a reverse-repo transaction to lend $M= p^{1}_{j\rightarrow i}T$ to $BD_{i}$ and a repo transaction to borrow $M= p^{1}_{k\rightarrow j}T$ from $BD_{k}$. The \sel contract with $BD_{i}$ requires $BD_{i}$ to send $M = p^{2}_{i\rightarrow j}T$ to $BD_{j}$ in exchange for $T$. FASB requires $BD_{j}$ to record $p^{1}_{T}T$ as an asset at the \fl, where $p^{1}_{T}$ is the cash market price for a unit of the collateral at the time of the \fl.\footnote{The cash market price $p^{1}_{T}$ may differ from the repo price $p^{1}_{j\to i}$.} Similarly, the \sel contract with $BD_{k}$ requires $BD_{k}$ to convey $T$ to $BD_{j}$ in exchange for $M = p^{2}_{j\rightarrow k}T$. FASB requires $BD_{j}$ to record the value of $T =  p^{1}_{T}T$ as an asset at the \fl. For the two transactions, $BD_{j}$ will record an increase of $p^{2}_{i\rightarrow j}T +p^{1}_{T}T$ assets on its balance-sheet at the \fl. This amount is added to the denominator of the SLR, which pushes the SLR closer to the regulatory lower bound.

\textbf{Effect of repo volume on the SLR} Figure \ref{fig:First-Leg Balance-Sheet Impact of Repo Intermediation: Post-Reform} shows that repo intermediation of volume $T$ of the \scu increases balance-sheet assets, which we denote $\Delta A$. Intermediation also increases liabilities. Equity - which is part of capital- is adjusted by the gap between the increase in liabilities and assets. The gap represents the total intermediation margin. There is no apriori way to determine the sign of the margin, however consideration of the thin margins observed in the repo market suggest the size of the gap is small relative to $\Delta A$.\footnote{The SOFR index of overnight U.S. Treasuries repo rates has been close to the Fed funds target rate except during times of market disruption. The margin for an overnight loan priced at an annualized interest rate of e.g. 5\% is very small relative to the volume of the trade. See Federal Reserve Bank of New York  \cite{FRBNY2023}} Therefore, we will ignore it. Equation \ref{eq:incremental leverage} shows that an increase in repo borrowing pushes down the leverage ratio by increasing assets in the denominator. The ratio cannot drop below the SLR bound $\underline{L}$.  If Equation \ref{eq:incremental leverage} starts out as an equality, an increase in $\underline{L}$ will reduce repo intermediation.

\begin{equation}
\label{eq:incremental leverage}
\underline{L} \leq \underset{\text{incremental repo $\downarrow$ SLR}}{\underbrace{\frac{capital}{assets + exposures + \Delta A}}}
\end{equation}

% Rearranging the terms in Equation \ref{eq: incremental leverage} shows that an increase in the SLR lower bound will reduce the volume of repo that bank $j$ can intermediate. If Equation \ref{eq: incremental leverage} starts out as an equality, an increase in $\underline{L}$ will reduce repo intermediation.

% constrains bank $j$'s repo intermediation capacity.

% \begin{equation}
% \label{eq: SLR bound}
% \Delta repo \;volume \leq \frac{capital}{\underline{L}} - assets
% \end{equation}

\textbf{Adjusting capital at the SLR lower bound} A bank at the SLR lower bound can enable its  broker-dealer affiliate to increase repo volume by allocating more capital to the broker-dealer. This can be accomplished by raising additional capital, which will increase the numerator (and RHS) of Equation \ref{eq:incremental leverage}, or by re-allocating internal capital to the broker-dealer, which leaves the RHS of Equation \ref{eq:incremental leverage} unchanged. It is not possible to predict whether a bank will find it profitable to do either of these things. However, it is possible to show that, for any repo margin and cost of capital and concave repo lender demand function, increasing the SLR lower bound $\underline{L}$ decreases the repo volume traded.\footnote{In Appendix \ref{app:Concavity} we derive an empirically founded increasing concave repo demand function, $D(r_{MM})$, for a money market fund repo lender and a decreasing concave supply function $S(r_{rm})$ for a hedge-fund borrower. A similar argument as made in the text can be used to show that an increase in the SLR lower bound will reduce repo lending to hedge fund borrowers.}

We use the notation from Section \ref{sec:Repo Nomenclature} Suppose $BD_{j}$ is a subsidiary of a bank holding company. Its affiliate bank can increase repo volume by allocating more capital to $BD_{h}$. This can be accomplished by raising additional capital, which will increase the RHS of Equation \ref{eq:incremental leverage}, or by re-allocating internal capital to $BD_{h}$, which leaves the RHS of Equation \ref{eq:incremental leverage} unchanged. Either method involves an opportunity cost of capital for repo, which we denote by the unit cost $c$.\footnote{The opportunity cost of external capital is the market price of acquiring the capital. The opportunity cost of internal capital is the profit that could be earned by deploying the capital elsewhere in the bank.} We analyze the effect that imposing the SLR lower bound on attainable trade volume between $BD_{h}$ and $MM$ in Figure \ref{fig:First and Second-Leg Repo Chain} when $BD_{h}$ is operating at the SLR lower bound. To achieve a unit increase in trade volume requires $BD_{h}$ increase capital by $\underline{L}$ units.\footnote{The capital requirement is determined by $\frac{\Delta capital\; +\; capital}{1 + assets+ exposures} = \underline{L}$.} The marginal cost of capital to enable an increase in repo volume at the SLR lower bound is $c\underline{L}$. $D(r_{MM})$ denotes the volume of $T$ that  $MM$ is willing to trade at repo rate $r_{MM}$. $r_{int}$ is the market-determined inter-dealer repo rate at which $BD_{h}$ can enter into a repo trade to acquire the collateral $T$ it is required to send to $MM$ at the \fl. We model $BD_{h}$ as choosing the repo rate. $BD_{h}$'s problem is to set the repo rate $r_{MM}$ it offers to $MM$ to maximize its profit. Equation \ref{eq:$BD_{h}$ problem with SLR} is $BD_{h}$'s decision problem.

\begin{equation}
\label{eq:$BD_{h}$ problem with SLR}
\argmax_{r_{MM}}\underset{\text{trading profit}}{\underbrace{(r_{int} - r_{MM})D(r_{MM})}} - \underset{\text{capital cost}}{\underbrace{ c\underline{L}\cdot \text{max}\{D(r_{MM}) - \overline{D},0\}}}
\end{equation}

Where $\overline{D}$ is the trade volume at the SLR constraint. The first-order equilibrium condition of Equation \ref{eq:$BD_{h}$ problem with SLR} at the SLR lower bound is

\begin{equation}
\label{eq: FOC with SLR}
f = \big[(r_{int} - r_{MM}) -  c\underline{L}\big]dD/dr_{MM} - D(r_{MM}) = 0
\end{equation}

The second derivative is 

\[-D/dr_{MM} +  \big[(r_{int} - r_{MM}) -  c\underline{L}\big]d^{2}D/dr_{MM}dr_{MM} - dD/dr_{MM} < 0\]

Which implies $BD_{h}$ faces a concave decision problem with a unique solution.\footnote{ The same result obtains if $MM$ chooses the repo rate. $MM$'s problem is $\argmax_{r_{MM}}D(r_{MM})\; s.t.\;(r_{int} - r_{MM})D(r_{MM}) - c\underline{L}\cdot\text{max}\{D(r_{MM} - \overline{D}, 0\}$.The KKT conditions ensure that the Lagrangian multiplier $\lambda \geq 0$. The remainder of the derivation is left as exercise for the reader.} It is not possible to infer whether, in a given instance, the maximum lies at a client repo rate that places $BD_{j}$ above or below the initial SLR lower bound. However, we can apply the implicit function theorem to evaluate the change in repo rate $r_{MM}$ that is induced by an increase in the SLR lower bound $\underline{L}$ when the constraint is binding.

\begin{equation}
\label{eq: SLR effect}
\frac{dr_{MM}}{d\underline{L}} = \frac{-df/d\underline{L}}{df/dr_{MM}} < 0
\end{equation}

Equation \ref{eq: SLR effect} demonstrates that an increase in the SLR lower bound $\underline{L}$ induces a decrease in $r_{MM}$, which reduces repo volume $D(r_{MM})$ for banks operating at the lower bound. By analogy, this shows that the SLR lower bound reduced the volume of repo intermediation for those banks that are operating near the SLR lower bound.

\subsection{Unintended effect of accounting rules and leverage regulations on  intermediation capacity}
\label{Unintended Consequences}

The reforms to repo accounting were designed to close loopholes that enable a bank to conceal ownership of or exposure to \scus to regulators and investors. The SLR was designed to provide a hard backstop to bank insolvency risk. The different aims of the two sets of regulations address distinct risks. However, in combination, they reduce the capacity of banks to intermediate repo trades. The accounting rules that require the \sel obligations to be recorded on the balance sheet increase recorded assets associated with a repo trade, which pushes down the leverage ratio. At the same time the SLR increases the lower bound underneath the leverage ratio.

\section{RepoMech }
\label{sec:RepoMech}

In this section we adapt the TradeMech method of netting trades in Aronoff et.al. \cite{Aronoff2025mech} and Aronoff \cite{Aronoffnetwrap} to repo trades. Table \ref{tab:Initial Contracts} displays initial \fl and \sel contracts between agents. Each row displays the elements of the trades between counterparties. The first column is the number assigned to the repo trade. The second column is the repo lender (sends \m in the \fl and sends \g in the \sel). The third column is the repo borrower (sends \g in the \fl and sends \m in the \sel). The fourth column is the \fl unit price of \g. The fifth column is the \sel unit price of \g.  The sixth column is the traded units of \g, which is the same for the \fl and \sel.

% \begin{table}[h!]
% \centering
% \caption{Initial Contracts}
% \label{tab:Initial Contracts}
% \begin{tabular}{lllll}
% \toprule
% \makecell{\textbf{Contract}\\\textbf{Number}} &\makecell{\textbf{Firm}\\\textbf{ Id}} & \makecell{\textbf{Counter -}\\\textbf{ party Id}} & \makecell{\textbf{Unit Price}\\\textbf{ \g for \m}} & \makecell{\textbf{Units of \g}}\\
% \midrule
% 1 & h & i & 5.25 & 5  \\
% 2 & k & i & 6.3  & 3  \\
% 3 & i & j & 6.55 & 5  \\
% 4 & i & g  & 3    & 4  \\
% 5 & g & j & 5.95 & 10 \\
% 6 & l & g & 5.95 & 6  \\
% 7 & h & f & 3.3  & 10 \\
% 8 & f & h & 3.1  & 8  \\
% 9 & k & g & 3.77 & 8  \\
% 10 & g & f & 6.53 & 10 \\
% 11 & f & i & 5.12 & 6  \\
% \bottomrule
% \end{tabular}
% \end{table}

\begin{table}[H]
\centering
\begin{tabular}{l|ccccc}
\toprule
\makecell{\textbf{Trade}\\\textbf{Number}} &\makecell{\textbf{Firm}\\\textbf{ Id}}& \makecell{\textbf{Counter-}\\\textbf{ party Id}} & \makecell{\textbf{First-Leg}\\\textbf{ Price}} & \makecell{\textbf{Second-Leg}\\\textbf{ Price}} & \makecell{\textbf{Collateral}\\\textbf{ Units (T)} }\\
\midrule
1 & h & i & \$4.90  & \$5.25 & 5  \\
2 & k & i & \$5.80  & \$6.30  & 3  \\
3 & i & j & \$6.10  & \$6.55 & 5  \\
4 & i & g & \$3.00    & \$3.00    & 4  \\
5 & g & j & \$5.40  & \$5.95 & 10 \\
6 & l & g & \$5.40  & \$5.95 & 6  \\
7 & h & f & \$3.00    & \$3.30  & 10 \\
8 & f & h & \$3.00    & \$3.10  & 8  \\
9 & k & g & \$2.90  & \$3.77 & 8  \\
10 & g & f & \$6.22 & \$6.53 & 10 \\
11 & f & i & \$4.60  & \$5.12 & 6  \\
\bottomrule
\end{tabular}
\caption{Initial Repo Contracts}
\label{tab:Initial Contracts}
\end{table}

\subsection{Assignment of trades to chains}
\label{subsec:Assignment of trades to chains}

After the initial contracts are executed  and the \fl contracts are cleared and settled, \sel trades between  agents are netted on collateral \g. The \sel netting works as follows. In contract \#7 agent $h$ sends 10 units of \g to agent $f$. In contract \#8 agent $f$ sends 8 units of \g to agent $h$. In the net trade agent $h$ sends agent $f$ 2 units of \g. The total \m paid by $f$ in contract \#7 is \$30.30. The total \m paid by $h$ in contract \#8 is \$24.80. The net price paid by $f$ to $h$ is \$5.50. The net unit price is \$2.75. Figure \ref{fig:Initial S Flow Network} depicts the network of \g flow formed by the netted initial contracts. Nodes represent agents and the numbers on the directed edges represent the units of \g flowing between the agents.\footnote{ We sometimes represent the flow from e.g. $h$ to $f$ by  $2_{h\to f}$ or, abstractly as $\g_{h\to f}$.}

\begin{figure}[H]
\begin{center}
\includegraphics[page=1,width=0.7\textwidth,height = 0.4\textheight]{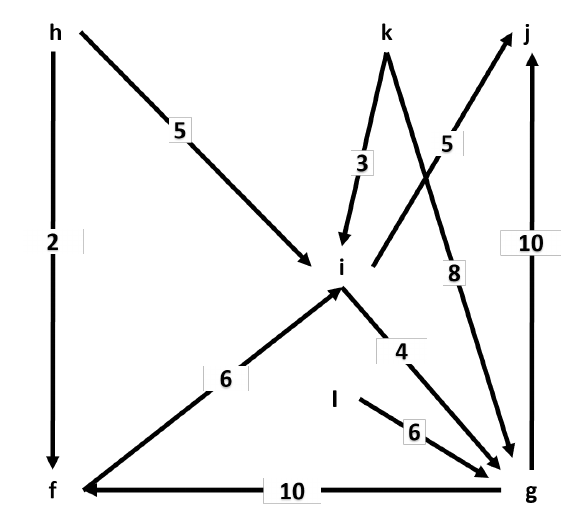}
\end{center}
\caption{Initial  \g - Flow Network}
\label{fig:Initial S Flow Network}
\end{figure}

\subsection{Separate excess trade flows and matched-trade trade flows}         
\label{subsec: Separate excess trade flows and matched-trade trade flows}

The second step divides each node into at most two nodes; one node has equal inflow and outflow of \g (the "balanced node" or "matched-trade node") and the other node has the excess of inflow or outflow (the "excess flow node"), if any. Figure \ref{fig:Splitting node g with a net outflow of 2} shows the node split for agent $g$, which has a net outflow of 2. The balanced trade (''$BT$'') node is placed in the middle of the graph and is labeled $BT_{g}$ and an  excess flow node is placed on the left side of the graph and labeled $RM_{g}$. Figure \ref{fig:Splitting node f with a net inflow of 6} shows the node split for agent  $f$, which has a net inflow of 6. The balanced flow is placed in the middle of the graph and is labeled $BT_{f}$ and the excess flow node is placed on the right side of the graph and labeled $RM_{f}$. $g$ and $f$ are the "parent" nodes and the balanced and excess flow nodes into which it is divided are called the "child" nodes. 

A key property of the node split is that the flows between child nodes of $f$ and $g$ are the same as flow between the parent $f$ and $g$ in their netted initial contracts. Two other features are first, the trade pattern for the children of $f$ and $g$ is unaffected by the order in which the splitting occurs. Second, the selection of \g flows that are attached to the excess outflow nodes $RM$ and $RM$ respectively in steps 1 and 1' of the Nodes Splitting Algorithm  are optimized for the un-netted excess flows by assigning \g flows to the $RM$ and $MM$ nodes in ascending order of the associated  \fl cash inflow. This minimizes \bs impact for a repo borrower - who has a \sel excess outflow of \g -  for whom \bs assets increase by the amount of \fl cash received. For a repo lender, who does not incur a \bs impact, the algorithm minimizes its \fl cash outflow.\footnote{Section  \ref{sec:Repo Accounting Rules and Leverage Regulations} explains the \bs impact of repo.}

\begin{figure}[H]
\begin{center}
\includegraphics[page=1,width=0.7\textwidth,height = 0.35\textheight]{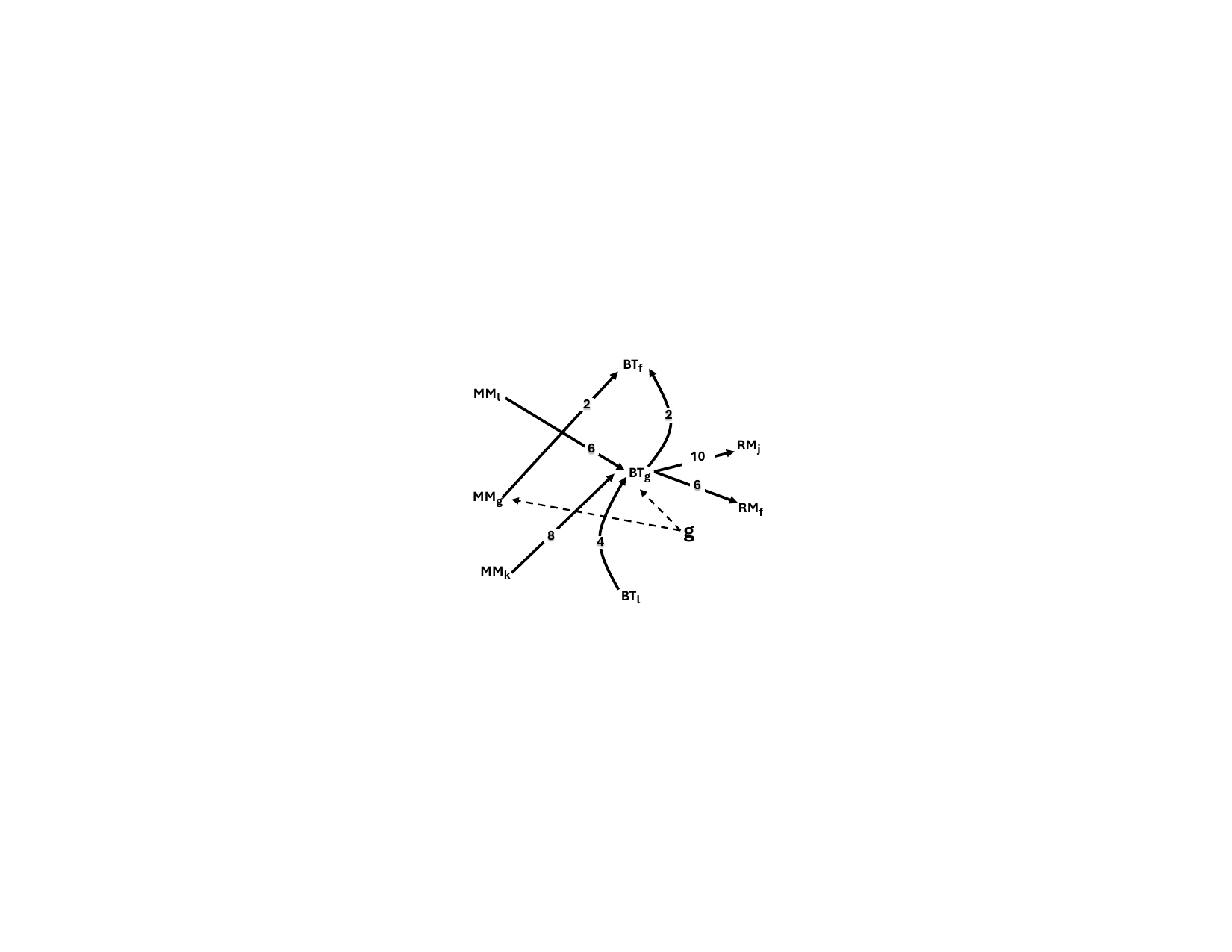}
\end{center}
\caption{Splitting node $g$ with a net outflow of 2}
\label{fig:Splitting node g with a net outflow of 2}
\end{figure}

\begin{figure}[H]
\begin{center}
\includegraphics[page=1,width=0.7\textwidth,height = 0.35\textheight]{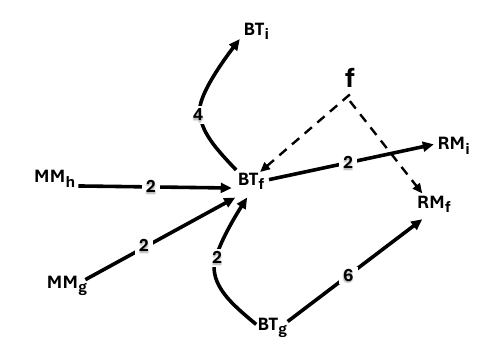}
\end{center}
\caption{Splitting node $f$ with a net inflow of 6}
\label{fig:Splitting node f with a net inflow of 6}
\end{figure}
\newpage

\underline{NODE SPLITTING ALGORITHM}
\textbf{For node $i$ with a net $T$ outflow $E$:}
\begin{enumerate}[label=\textbf{Step \arabic*.}, leftmargin=*, itemsep=0.4em]
\item  \textbf{Choose} $E$ volume of outflow in ascending order of \fl unit price.

\item Subtract the selected $E$ outflow from directed edges flowing from $i$ in the trade flow graph.

\item Create a new node $NS_{i}$ and attach directed edges carrying the $E$ outflow that was detached from $i$. Label the remaining node $BT_{i}$.
\end{enumerate}
  
  \textbf{For node $i$ with a net $T$ inflow $E$:}
  \begin{enumerate}[label=\textbf{Step \arabic*'.}, leftmargin=*, itemsep=0.4em] 
  \item \textbf{Choose} $E$ volume of inflow in ascending order of the \fl unit price\;
   \item Subtract the selected $E$ inflow from directed edges flowing into $i$ from other nodes\;
  \item Create a new node $NR_{i}$ and attach directed edges carrying the $E$ inflow that was detached from $i$. Label the remaining node $BT_{i}$
  \end{enumerate}

The node splitting results in the flow network of Figure \ref{fig:Trade Flow Network}. 

\begin{figure}[H]
\begin{center}
\includegraphics[page=1,width=0.75\textwidth,height = 0.3\textheight]{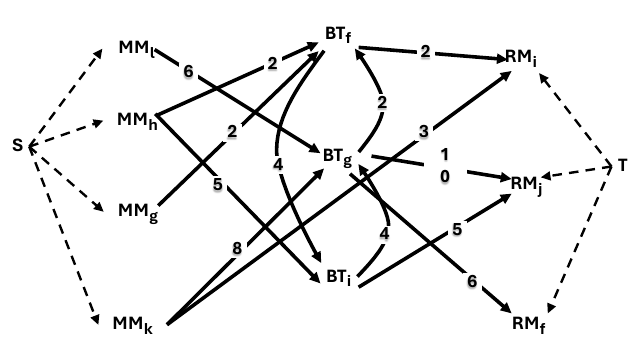}
\end{center}
\caption{Trade Flow Network (''TFN'')}
\label{fig:Trade Flow Network}
\end{figure}

The TFN is then decomposed into chains and cycles and the units of money are re-attached to the edges.\footnote{For details on the method of decomposing the TFN into chains and cycles Aronoff et.al. \cite{Aronoff2025mech}.}

{\small{
Chain 1: $MM_{k}$ 
\tikz[baseline]{
  \draw[->] (0,0.1) -- (1,0.1) node[midway, above] {3\g};
  \draw[<-] (0,-0.1) -- (1,-0.1) node[midway, below] {\$18.90\m};} $RM_{i}$

\vspace{1em}

Chain 2: $MM_{K}$ 
\tikz[baseline]{
  \draw[->] (0,0.1) -- (1,0.1) node[midway, above] {8\g};
  \draw[<-] (0,-0.1) -- (1,-0.1) node[midway, below] {\$30.16};} $BT_{g}$ 
\tikz[baseline]{
  \draw[->] (0,0.1) -- (1,0.1) node[midway, above] {8\g};
  \draw[<-] (0,-0.1) -- (1,-0.1) node[midway, below] {\$47.60};} $RM_{j}$

\vspace{1em}

Chain 3: $MM_{l}$ 
\tikz[baseline]{
  \draw[->] (0,0.1) -- (1,0.1) node[midway, above] {2\g};
  \draw[<-] (0,-0.1) -- (1,-0.1) node[midway, below] {\$11.90};} $BT_{g}$ 
\tikz[baseline]{
  \draw[->] (0,0.1) -- (1,0.1) node[midway, above] {2\g};
  \draw[<-] (0,-0.1) -- (1,-0.1) node[midway, below] {\$11.90};} $RM_{j}$

\vspace{1em}

Chain 4: $MM_{l}$ 
\tikz[baseline]{
  \draw[->] (0,0.1) -- (1,0.1) node[midway, above] {4\g};
  \draw[<-] (0,-0.1) -- (1,-0.1) node[midway, below] {\$23.8};} $BT_{g}$ 
\tikz[baseline]{
  \draw[->] (0,0.1) -- (1,0.1) node[midway, above] {4\g};
  \draw[<-] (0,-0.1) -- (1,-0.1) node[midway, below] {\$26.12};} $RM_{f}$

%\columnbreak

Chain 5: $MM_{h}$ 
\tikz[baseline]{
  \draw[->] (0,0.1) -- (1,0.1) node[midway, above] {5\g};
  \draw[<-] (0,-0.1) -- (1,-0.1) node[midway, below] {\$26.25};} $BT_{i}$ 
\tikz[baseline]{
  \draw[->] (0,0.1) -- (1,0.1) node[midway, above] {5\g};
  \draw[<-] (0,-0.1) -- (1,-0.1) node[midway, below] {\$32.75};} $RM_{j}$

\vspace{1em}

Chain 6: $MM_{h}$ 
\tikz[baseline]{
  \draw[->] (0,0.1) -- (1,0.1) node[midway, above] {2\g};
  \draw[<-] (0,-0.1) -- (1,-0.1) node[midway, below] {\$8.20};} $BT_{f}$ 
\tikz[baseline]{
  \draw[->] (0,0.1) -- (1,0.1) node[midway, above] {2\g};
  \draw[<-] (0,-0.1) -- (1,-0.1) node[midway, below] {\$10.24};} $RM_{i}$

\vspace{1em}

Chain 7: $MM_{g}$ 
\tikz[baseline]{
  \draw[->] (0,0.1) -- (1,0.1) node[midway, above] {2\g};
  \draw[<-] (0,-0.1) -- (1,-0.1) node[midway, below] {\$13.06};} $BT_{f}$ 
\tikz[baseline]{
  \draw[->] (0,0.1) -- (1,0.1) node[midway, above] {2\g};
  \draw[<-] (0,-0.1) -- (1,-0.1) node[midway, below] {\$10.24};} $BT_{i}$ 
\tikz[baseline]{
  \draw[->] (0,0.1) -- (1,0.1) node[midway, above] {2\g};
  \draw[<-] (0,-0.1) -- (1,-0.1) node[midway, below] {\$6.00};} $BT_{g}$ 
\tikz[baseline]{
  \draw[->] (0,0.1) -- (1,0.1) node[midway, above] {2\g};
  \draw[<-] (0,-0.1) -- (1,-0.1) node[midway, below] {\$13.06};} $RM_{f}$
}}

\vspace{2em}
\noindent
Cycle 1:
\begin{center}
\begin{tikzpicture}[>=stealth]

  \node (g) at (0, 0) {$BT_{g}$};
  \node (f) at (3.5, 0) {$BT_{f}$};
  \node (i) at (1.75, 3) {$BT_{i}$};

  % g <-> f
  \draw[<->] (g) -- (f) 
    node[midway, above, sloped] {2\g}
    node[midway, below, sloped] {\$13.06};

  % f <-> i
  \draw[<->] (f) -- (i) 
    node[midway, above, sloped] {2\g}
    node[midway, below, sloped] {\$10.24};

  % i <-> g
  \draw[<->] (i) -- (g) 
    node[midway, above, sloped] {2\g}
    node[midway, below, sloped] {\$6.00\m};

\end{tikzpicture}
\end{center}

The flow of objects between the nodes associated with agents is divided among chains and cycles, but the aggregate flows between pairs of agents are unaffected by transformation of the graph representing \sel trades.\footnote{Aronoff et.al. \cite{Aronoff2025mech}.} 

\subsection{Replacement contracts}
\label{subsec:Replacement contracts}

In this section we describe features of replacement contracts in Aronoff et.al. \cite{Aronoff2025mech} that are relevant to the accounting treatment of the transactions. 

\textbf{Replacement contracts} After clearing and settlement of \fl contracts, the initial \sel contracts are terminated and replaced by \mpc contracts on chains and cycles. For each node, the net flow is computed from its connected edges. It is the outflow minus the inflow. A node sends its negative net flow and receives its positive net flow, as directed by the \mo. Tables \ref{tab:Net flows on Chain 7} and \ref{tab:Net flows on Cycle 1} display he net flow of objects on a chain and a cycle.

\begin{table}[H]
\centering\
%\resizebox{\textwidth}{!}{
\small{
\begin{tabular}{l|c c c c c|c}
\textbf{Object flow} &\textbf{$MM_{g}$} & \textbf{$BT_{f}$}  & \textbf{$BT_{i}$} & \textbf{$BT_{g}$}  & \textbf{$RM_{f}$} & \makecell{Net flow\\ on Chain}\\
\hline
\g - flow   & 2 - out  & &  &   & 2 -in & 0\\
\m - flow   & 13.06 - in & 2.82 - out & 4.24 - out  & 7.06 - in  &  13.06 - out & 0\\              
\end{tabular}
}
\caption{Net flows on Chain 7}
\label{tab:Net flows on Chain 7}
\end{table}
\begin{table}[H]
\centering\
%\resizebox{\textwidth}{!}{
\small{
\begin{tabular}{l|c c c|c}
\textbf{Object flow} &\textbf{$BT_{i}$} & \textbf{$BT_{f}$}  & \textbf{$BT_{g}$}& \makecell{Net flow\\ on Cycle}\\
\hline
\g - flow   &   & & & 0  \\
\m - flow   & 4.24 - out &  2.82 - out & 7.06 - in & 0 \\              
\end{tabular}
}
\caption{Net flows on Cycle 1}
\label{tab:Net flows on Cycle 1}
\end{table}

\textbf{Nonperformance} When a node fails to send its required object, the affected chain or cycle is decomposed into a set of chains that place the nonperforming node in a bilateral contract with the neighbor node with whom it was obligated to send the object in their \ic. Below is the decomposition of Chain 7 induced by a failure of $BT_{i}$ to send its net obligations of 4.24 units of \m. The flows of \g and \m along edges are unchanged. Overall, agent profit and gross obligations are unchanged.

%— Chain 7a: RM₍g₎ → BT₍f₎ → BT₍i₎
Chain 7a: 
$MM_{g}$
\tikz[baseline]{
  \draw[->] (0,0.1) -- (1,0.1) node[midway, above] {2\g};
  \draw[<-] (0,-0.1) -- (1,-0.1) node[midway, below] {13.06\m};
}
$BT_{f}$
% \tikz[baseline]{
%   \draw[->] (0,0.1) -- (1,0.1) node[midway, above] {2\g};
%   \draw[<-] (0,-0.1) -- (1,-0.1) node[midway, below] {10.24\m};
% }
%$BT_{i}$

%— Chain 7b: BT₍i₎ → BT₍g₎
Chain 7b: 
$BT_{i}$
\tikz[baseline]{
  \draw[->] (0,0.1) -- (1,0.1) node[midway, above] {2\g};
  \draw[<-] (0,-0.1) -- (1,-0.1) node[midway, below] {6.0\m};
}
$BT_{g}$
$BT_{g}$
\tikz[baseline]{
  \draw[->] (0,0.1) -- (1,0.1) node[midway, above] {2\g};
  \draw[<-] (0,-0.1) -- (1,-0.1) node[midway, below] {13.06\m};
}
$RM_{f}$

%— Chain 7c: BT₍g₎ → RM₍f₎
$\underset{\text{Recovered initial trade}}{\underbrace{\text{Chain 7c:} 
BT_{f}
\tikz[baseline]{
  \draw[->] (0,0.1) -- (1,0.1) node[midway, above] {2T};
  \draw[<-] (0,-0.1) -- (1,-0.1) node[midway, below] {10.24M};
}
BT_{i} 
}}$

The notable feature is Chain 7c, where agents $f$ and $i$ are placed in a bilateral contract. In the new contract $i$ is obligated to send 10.24 units of \m
in exchange for 2 units of \g . The unit price of \$5.12 is the initial contract terms between $f$ and $i$, scaled to the 2 units of \g assigned to Chain 7c (Table \ref{tab:Initial Contracts}). This reflects a property of RepoMech. When an agent fails to perform  - agent $g$ in this example - it is placed in a bilateral contract with the initial contract counterparty to whom it was obligated to send the object - agent $i$ in this example. This result reflects that counterparty risk is unaffected by the rearrangement of trades.

\textbf{Margin payments} When the \ic represents a forward contract, contracting parties are subject to margin requirements which are periodic additions or subtractions from the escrow accounts of agents associated with the end-nodes. The margin obligations are typically based on changes in the value at risk (``VAR''). Denoting \g as a security and \m as money for this purpose, VAR is composed of the changes in market price of \g and market volatility from the immediate prior margin adjustment. When VAR increases, the net escrow requirement 
increases for the seller (and correspondingly decreases for the buyer). The opposite occurs when VAR decreases. Subject to uniform margining formulas, margin escrow on a chain is only paid by end-nodes, since the net flow of \g is zero for all intermediate nodes. There is no margin requirement on cycles, since the flow of \g is netted to zero for every node.

\section{Accounting and Legal Features of RepoMech}
\label{sec:Accounting and Legal Features of RepoMech}

In this section we present an overview of the accounting treatment of RepoMech and its relation to bankruptcy remoteness. Section \ref{subsec: Accounting treatment of RepoMech} explains the accounting treatment and balance-sheet impact of RepoMech.  Section \ref{subsec:Bankruptcy safe-harbor of the replacement second-leg contracts} verifies that RepoMech maintains the repo bankruptcy safe harbor. Finally, Section \ref{subsec:Replacement first-leg contracts} shows how the termination and replacement of first-leg contracts can be incorporated into RepoMech. Appendix \ref{app:Repo Accounting Rules} provides a detailed discussion of repo accounting rules.

\subsection{Accounting treatment of RepoMech}
\label{subsec: Accounting treatment of RepoMech}

% The trades associated with the second-leg of each initial repo contract are assigned to replacement second-leg contracts, one for each chain and cycle. A fixed volume of \g is assigned to each chain and cycle and trades are netted, which means there is no exchange of \g on intermediate nodes on chains and all nodes on cycles. When a node fails to comply, the affected chain or cycle is decomposed into new chains, on which trades are netted, and so forth recursively. There may be different first-leg accounting treatments for intermediate nodes and end-nodes. In this section we discuss the accounting treatment for each type of replacement second-leg contract. Appendix \ref{app:Repo Accounting Rules} contains a more detailed discussion with references to the applicable accounting rules.

% \subsubsection{Multilateral netting of repo trades}
% \label{subsubsec:Multilateral netting of repo trades}

Each chain formed by the decomposition of the TFN has a fixed volume of \g assigned to each edge with an $MM$ node on the left end, an $RM$ node on the right end and $BT$ nodes in the middle. The replacement contract nets trades on each chain and cycle. The result is that all matched-trade $BT$ volume is fully netted in \g.  The accounting implications is that all matched trades are treated as first-leg final sales. The un-netted $MM$ and $RM$ trades may, or may not, be treated as first-leg final sales.

\subsubsection{Intermediate nodes on chains and  nodes on cycles}

The key fact concerning the accounting treatment of the trades of intermediate nodes on chains and nodes on cycles is that the transfer of the security at the second-leg is netted out, so that the node does not send or receive the security (except in the event of a nonperformance by another node). The absence of a repurchase transaction implies that the first-leg of trades assigned to intermediate nodes on chains and cycles are treated as final sales. The second-leg is the FMV of the money payment the node is scheduled to receive under the replacement second-leg contract (which could include estimated cost related to nonperformance of nodes). For node $i$ on the repo chain in Figure \ref{fig:First and Second-Leg Repo Chain}
the upper bound of FMV is $T(p^{2}_{j\to i} - p^{2}_{i \to h})$. Figure \ref{fig:First-Leg Balance-Sheet Impact for Intermediate Nodes} depicts the first-leg balance-sheet impact for an intermediate node. 

\begin{figure}[H]

\begin{center}
\text{Node $i$ on Figure 1 repo chain} \vskip5pt
\begin{tabular}{l|l}
\hline
\multicolumn{1}{c|}{\textbf{Liabilities}} & \multicolumn{1}{c}{\textbf{Assets}} \\
\hline
                    & \underline{First-leg final sale}\\
                    & $Tp^{1}_{h\to i} - Tp^{1}_{i\to j}$\\
                    & \underline{Second-leg}\\
                    & FMV \sel\\
                   \hline
                    & $\Delta A \approx i$'s total intermediation margin\\
\end{tabular}
\end{center}
\caption{First-Leg Balance-Sheet Impact for Intermediate Nodes:RepoMech}
\label{fig:First-Leg Balance-Sheet Impact for Intermediate Nodes}
\end{figure}

A comparison of Figure \ref{fig:First-Leg Balance-Sheet Impact for Intermediate Nodes} to the pre-reform accounting treatment of repo intermediaries in Figure \ref{fig:First-Leg Balance-Sheet Impact of Repo Intermediation: Pre-Reform} and the post-reform accounting treatment in Figure \ref{fig:First-Leg Balance-Sheet Impact of Repo Intermediation: Post-Reform}, shows thatRepoMech reflects the balance-sheet impact of pre-reform accounting, which is an order of magnitude lower than under secured financing accounting.

\subsubsection{End-nodes}
\label{subsubsec:End-nodes}

The accounting treatment of end-nodes depends on whether the trades are classified as repo trades or first-leg final sales with an embedded derivative second-leg. Repo accounting rules do not directly address the replacement second-leg contract structure, due to its unprecedented uniqueness. The crucial issue concerns the fact that under the \rtm the agent to whom the initial owner sends the \fa at the first-leg is different from the agent from whom the initial owner receives back the \fa at the second-leg. The accounting rules do not state whether this would disqualify the trade from being treated as a secure financing. The alternative would be a final sale accounting. In Appendix \ref{subsubsec:End-nodes of chains} we address this question in more detail.

When it is desired to ensure that end-node trades are treated as secured financings - in order to prevent an agent from removing a net first-leg sale of the security from its balance-sheet - the continuity of counterparties can be established by a procedure that mirrors the treatment of second-leg contracts. First, before trade occurs, assign initial first-leg trades to the same chains and cycles to which the associated initial second-leg trades are assigned (Section \ref{subsec:Replacement first-leg contracts}). Second, terminate initial first-leg contracts and replace with contracts on chains where end-nodes are counterparties and intermediate nodes are guarantors, matched with corresponding replacement second-leg contracts with end-node counterparties. By undertaking these steps the end-nodes retain the same counterparty and financial object volume at the first and second-leg, which may make the trade eligible to be treated as a secured financed repo trade.  

\subsubsection{Supplementary leverage ration representations}

An initial repo trade increases the SLR denominator by $\triangle A$. For intermediate nodes on chains and node on cycles $\triangle A$ is the intermediation margin, which is a fraction of $\triangle A$ for post-reform repo intermediation (Figure \ref{fig:First-Leg Balance-Sheet Impact of Repo Intermediation: Post-Reform}), which is the \fl sale price.  The $MM$ initial repo lender end-node do not affect the SLR, which is also the case for the $MM$ trades under post-reform secured-financing accounting. The affect the $RM$ initial repo borrower end-nodes have on the SLR depends on their accounting treatment. Under final-sale derivative accounting the impact is the total intermediation margin. Under secured-lending accounting it is the \fl sale price of the \fa.

\subsection{Bankruptcy safe-harbor of the replacement second-leg contracts}
\label{subsec:Bankruptcy safe-harbor of the replacement second-leg contracts}

A feature of repo contracts that is valued by repo lenders such as money market funds is that, in the event of a default by a bankrupt borrower, the lender can immediately sell the security to a third party in order to realize all, or a portion of the money that it is owed at the second-leg. This right cannot be nullified by a Court.\footnote{There is a literature on the welfare implications of the repo safe-harbor. For a discussion of the issues see e.g. Duffie and Skeel \cite{DuffieSkeel2013}. We do not contribute to this debate. Our purpose is to demonstrate thatRepoMech does not alter this aspect of the legal environment for repo trading.} We show that the bankruptcy safe-harbor applies to replacement second-leg contracts underRepoMech.

\textbf{Safe-harbor for repo contracts} When a counterparty to a repo trade files for Bankruptcy under Chapter 11 of the U.S. Code, the trade cannot be avoided by the Court (11 U.S. Code § 546(f), 11 U.S. Code § 362(b)(7), and 11 U.S. Code § 561). The non-bankrupt seller has the statutory right to offset its loss by liquidating the security and the non-bankrupt purchaser has the statutory right to offset its loss by acquiring the security. 

\begin{quote}The exercise of a contractual right of a repo participant or financial participant to cause the liquidation, termination, or acceleration of a repurchase agreement...shall not be stayed, avoided, or otherwise limited by operation of any provision of this title or by order of a court or administrative agency in any proceeding under this title. (11 U.S. Code § 559)
\end{quote}

\begin{quote}
(b) The filing of a petition...does not operate as a stay...(b)(7) of the exercise by a repo participant or financial participant of any contractual right...under any security agreement or arrangement or other credit enhancement forming a part of or related to any repurchase agreement, or of any contractual right to offset or net out any termination value, payment amount, or other transfer obligation arising under or in connection with 1 or more such agreements, including any master agreement for such agreements.   (11 U.S. Code § 362)
\end{quote}

\textbf{Safe-harbor for replacement second-leg contracts}RepoMech is a multilateral netting contract between all participants. The authority to re-arrange second-leg trades in accordance with the protocol is stated in each initial repo trade contract.RepoMech  implements a multilateral netting of second-leg repo trades whereby each agent sends and receives the same net volume of money and the security as the initial repo contracts. The replacement second-leg contracts, which are the netting agreements associated with the initial repurchase agreements, are granted safe-harbor under multiple sections of the bankruptcy code.

\begin{quote}
(f)...the trustee may not avoid a transfer made by or to (or for the benefit of) a repo participant or financial participant, in connection with a repurchase agreement... (j) the trustee may not avoid a transfer made by or to (or for the benefit of) a master netting agreement participant under or in connection with any master netting agreement or any individual contract covered thereby...except to the extent that the trustee could otherwise avoid such a transfer made under an individual contract covered by such master netting agreement. (11 U.S. Code § 546) 
\end{quote}

\begin{quote}
(b) The filing of a petition...does not operate as a stay...(b)(27)of the exercise by a master netting agreement participant of any contractual right...under any security agreement or arrangement or other credit enhancement forming a part of or related to any master netting agreement, or of any contractual right...to offset or net out any termination value, payment amount, or other transfer obligation arising under or in connection with 1 or more such master netting agreements to the extent that such participant is eligible to exercise such rights... for each individual contract covered by the master netting agreement...(11 U.S. Code § 362)
\end{quote}

The conclusion is that agents who participate inRepoMech retain the safe-harbor protections that apply to the initial repo trades.

\subsection{Replacement first-leg contracts}
\label{subsec:Replacement first-leg contracts}

The trading mechanism can be expanded to implement an assignment of first-leg trades to chains and cycles. One possible reason for doing so would be to reduce the volume of financial objects flowing between agents at the \fl. Initial \fl contracts can be terminated and replaced by replacement \fl contracts on the chains and cycles using the same method as discussed in Section \ref{sec:RepoMech}. The direction of flow of money and the security is reversed. The \fl  and \sel volume of \g is the same, with the direction of flow reversed. This implies our method will generate the same set of chains and cycles as was generated from \sel flows of \g. The \fl prices may differ from the \sel prices. Below we display the \fl of Chain 7. for continuity with our presentation above, the second-leg identities of $MM$ and $RM$ are fixed, and their flows are reversed at the first-leg (i.e. $MM$ receives \g at the first-leg and $RM$ sends \g at the first-leg).

{\small{
Chain 7: $MM_{g}$ 
\tikz[baseline]{
  \draw[<-] (0,0.1) -- (1,0.1) node[midway, above] {2\g};
  \draw[->] (0,-0.1) -- (1,-0.1) node[midway, below] {\$12.44};} $BT_{f}$ 
\tikz[baseline]{
  \draw[<-] (0,0.1) -- (1,0.1) node[midway, above] {2\g};
  \draw[->] (0,-0.1) -- (1,-0.1) node[midway, below] {\$9.20};} $BT_{i}$ 
\tikz[baseline]{
  \draw[<-] (0,0.1) -- (1,0.1) node[midway, above] {2\g};
  \draw[->] (0,-0.1) -- (1,-0.1) node[midway, below] {\$10.80};} $BT_{g}$ 
\tikz[baseline]{
  \draw[<-] (0,0.1) -- (1,0.1) node[midway, above] {2\g};
  \draw[->] (0,-0.1) -- (1,-0.1) node[midway, below] {\$12.44};} $RM_{f}$
}}

Because the replacement first-leg contracts are reflections of the replacement second-leg contracts, they satisfy the replacement contract invariance stated in Proposition 1. A node's failure to send the required financial object will initiate the same recursive decomposition of chains and cycles, termination and issuance of new contracts, as described in Section \ref{sec:RepoMech}. The accounting treatment of trades with first and second-leg assigned to intermediate node on chains and nodes on cycles would not change as a result of the reorganization of first-leg trading. Those trades are netted for \g on both legs. Consequently, the balance-sheet impact of \fl trading for those nodes is their profit margin. This is the same as the impact of final sales.  The accounting treatment of trades that are assigned to end-nodes of chains remains open as to whether they are treated as first-leg final sales or secured financings (Section \ref{subsec: Accounting treatment of RepoMech}).

\section{Comparing RepoMech to Central Clearing}
\label{sec:Comparing RepoMech to Central Clearing}

\begin{figure}[H]
\begin{center}
\includegraphics[page=1,width=0.75\textwidth,height = 0.3\textheight]{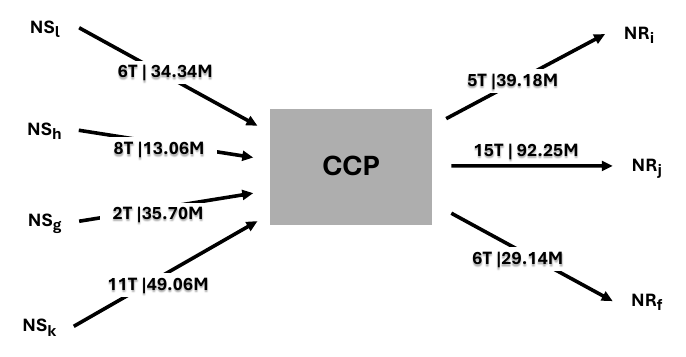}
\end{center}
\caption{Central Clearing}
\label{fig:Central Clearing}
\end{figure}

There is a straightforward relationship between our trading mechanisms and central clearing. Both can be derived from the TFN (Figure \ref{fig:Trade Flow Network}). The TFN allocates the un-netted trades to the net sending \g nodes, $MM$, located on the left side, and the net receiving \g nodes, $RM$, located on the left. In our trading mechanism the TFN is decomposed into chains and cycles. These nodes, and the flows along the edges attached to them, become end-nodes of chains. The $BD$ nodes are in the middle of chains and on cycles. Their \g-flows  are fully netted. 

Figure \ref{fig:Central Clearing} displays the transformation of the TFN to central clearing, with the flow of \m re-attached. The $MM$ and $RM$ nodes are connected to the CCP and the trades between $BD$ nodes are extinguished. This is the multilateral netting of central clearing, with agents paying to, or receiving from, the CCP the net \m flows of their $BD$ nodes. The process by which this occurs is that the trade contracts entered into by agents are novated and replaced by contracts with the CCP.

\subsection{Accounting treatment of central clearing}
\label{subsec:Accounting treatment of central clearing}

Under the current institutional protocol for central clearing of Treasuries repo, initial first-leg contracts are novated and replaced by identical contracts with a central clearing counterparty ("CCP").\footnote{See Treasury Market Practices Group White Paper on Clearing and Settlement in the Market for U.S. Treasury Secured Financing Transactions \cite{TMPG-repo} for a detailed description of the central clearing of DVP repo.} The replacement first-leg contracts are executed on a trade-by-trade basis. An agent's second-leg flows of the security are partitioned into un-netted flow and matched-trade (netted) trades following the logic of the Parent to Children Node Splitting Algorithm. The un-netted flow trades are aggregated into a single transaction with the CCP and are treated as a secured financing.\footnote{The aggregate second-leg un-netted trade flows of money and the security can be allocated to match first-leg contracts for accounting purposes.} The remaining matched-trades are treated as first-leg final sales with second-leg trades netted out and the agent sending or receiving the sum of the money flows from the CCP, which is recorded as a balance-sheet asset at FMV.\footnote{See Appendix \ref{subsec:Accounting treatment of centrally cleared repo} for the application of accounting rules to central clearing.}

\subsection{Balance-sheet impacts of RepoMech and central clearing}
\label{subsec:Balance-sheet impacts of RepoMech and central clearing}

Proposition 2 shows that the balance-sheet impact of repo trades under RepoMech is weakly smaller than the balance-sheet impact under central clearing.

\begin{proposition****}
For any set of initial repo trades the increase in an agent's first-leg assets is weakly lower under RepoMech compared to central clearing, up to the second-leg FMV.
\end{proposition****}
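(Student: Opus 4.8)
The plan is to compare the two mechanisms one node-type at a time, exploiting the fact (Section~\ref{subsec:Accounting treatment of central clearing}) that \cc partitions each agent's \sel flow of the \scu with the same node-splitting algorithm that RepoMech uses. Both mechanisms therefore decompose a given agent into the identical collection of balanced $BT$ nodes and un-netted $MM$ and $RM$ end-nodes carrying identical trade assignments, so it suffices to compare the \fl asset increase contributed by each node type and then sum.

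First I would dispose of the matched-trade $BT$ volume. Under both mechanisms the \sel transfer of the \scu on these trades nets to zero, so, absent nonperformance, the node neither sends nor receives the \scu at the \sel. By the treatment established in Section~\ref{subsec: Accounting treatment of RepoMech} and mirrored for \cc in Section~\ref{subsec:Accounting treatment of central clearing}, the \fl of a fully netted trade is recorded as a final sale, with \fl asset increase equal to the \fl margin $Tp^{1}_{h\to i} - Tp^{1}_{i\to j}$ and the residual money flow booked at \sel FMV (Figure~\ref{fig:First-Leg Balance-Sheet Impact for Intermediate Nodes}). Because the splitting and the netted flows coincide across the two mechanisms, this contribution is identical under RepoMech and \cc.

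Next I would treat the un-netted end-nodes. For an $MM$ node (a net repo lender, with a net \sel outflow of the \scu), the \fl purchase replaces cash by an asset of equal recorded value --- the \scu under final-sale accounting or a receivable under secured-financing accounting --- so total \fl assets are unchanged under either treatment, and the $MM$ contribution is the same (and negligible) under both mechanisms. The entire discrepancy is thus concentrated in the $RM$ nodes (net repo borrowers). Under \cc the aggregated un-netted $RM$ trade is booked as a secured financing, so its \fl asset increase equals the \fl sale price of the \scu (Figure~\ref{fig:First-Leg Balance-Sheet Impact of Repo Intermediation: Post-Reform}). Under RepoMech the same trade is \emph{either} treated as a \fl final sale, whose \fl asset increase is the total intermediation margin (Figure~\ref{fig:First-Leg Balance-Sheet Impact of Repo Intermediation: Pre-Reform}), \emph{or} treated as a secured financing, whose increase equals the \fl sale price exactly as under \cc. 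Since the intermediation margin is weakly smaller than the \fl sale price, the RepoMech contribution of each $RM$ node is weakly smaller than its \cc contribution.

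Summing over node types, the $BT$ and $MM$ contributions coincide across the two mechanisms while the $RM$ contributions are weakly smaller under RepoMech; hence, for every agent, the \fl asset increase is weakly lower under RepoMech than under \cc, the \sel FMV entry being recorded identically in both. I expect the main obstacle to be the $RM$-node step: the conclusion hinges on pinning down the two admissible accounting treatments --- that a \fl final sale registers only the margin whereas a secured financing registers the full \fl sale price --- and on verifying that RepoMech never forces an end-node into a treatment strictly worse than the one used under \cc. A second point requiring care is the claim that the two mechanisms isolate exactly the same un-netted volume, which is what legitimizes the node-by-node cancellation; this rests on their shared use of the node-splitting algorithm recorded in Section~\ref{subsec:Accounting treatment of central clearing}.
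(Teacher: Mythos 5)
Your proposal is correct and follows essentially the same route as the paper's proof: matched (fully netted) trades receive identical final-sale treatment under both regimes, while un-netted trades must be secured financings under central clearing but may be treated as final sales under RepoMech, which yields the weak inequality. Your additional split of the end-nodes into $MM$ and $RM$ cases is a finer-grained restatement of the same argument (and matches the paper's own discussion of SLR representations), not a different method.
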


\begin{proof}
Section \ref{subsec: Accounting treatment of RepoMech} showed that (i) the first-leg of all matched initial repo trades, which are equivalent to multilateral netted trades, are treated as final sales with the FMV of the second-leg included as an asset on the agent's balance-sheet and (ii) the remaining un-netted trades may be treated in the same way as matched-trades or or may be treated as secured financing, which has a larger balance-sheet impact. 

Section \ref{subsec:Accounting treatment of central clearing} showed that all matched initial repo trades have the same balance-sheet impact under RepoMech and central clearing and that all un-netted trades are treated as secured financings. 

The possibility of treating the first-leg of un-netted repo trades under RepoMech as final sales proves that the balance-sheet impact of RepoMech is weakly less than central clearing for a common FMV. This proves the proposition.  
\end{proof}

\subsection{Compatibility with mandatory central clearing of U.S. Treasuries repo}
\label{subsec: Compatibility with mandatory central clearing of U.S. Treasuries repo}

In December 2023 the U.S. Securities and Exchange Commission ("SEC") issued a final rule that mandates members of covered clearing agencies to centrally clear all transactions involving certain types of Treasuries trades (the "rule") \citep{SEC-UST-CCP-Rule-2023}. Repo trades are among the trades covered by the rule and are required to comply by June 30, 2026. The Fixed Income Clearing Corporation ("FICC") is currently the only covered clearing agency ("CCA") for Treasuries repo. Intermediaries carrying out the majority of repo volume are members of the FICC Government Securities Division ("GSD")  and currently clear their inter-dealer trades with FICC acting as the CCP \citep{Kahn2021}.\footnote{FICC GSD members are authorized to clear Treasuries trades on the FICC platform \url{https://www.dtcc.com/client-center/ficc-gov-directories}} Starting in 2026 these intermediaries must centrally clear all repo trades with non-FICC platform members. In the current environment where repo intermediation is dominated by regulated agents, most Treasuries repo transactions, as defined by the SEC, will be required to centrally clear through FICC. RepoMech is not a transaction type currently covered by the SEC rule. The rule states;

\begin{quote}
The proposed definition of an eligible secondary market transaction would include,
among other things, all U.S. Treasury repurchase and reverse repurchase agreements entered into by a direct participant of a U.S. Treasury securities CCA...in a U.S. Treasury repo transaction, one party sells a U.S. Treasury security to another party ... and commits to repurchase the security at a specified price on a specified later date... and a reverse repo transaction is the same transaction from the buyer’s perspective. (SEC  \cite{SEC-UST-CCP-Rule-2023} Section II. 2(a)) 
\end{quote}

Clearly, the rule does not apply to trades assigned to intermediate nodes on chains or cycles, where there is no repurchase transaction, and it may not apply to end-node transactions (Section \ref{subsubsec:End-nodes of chains}). Further clarification is provided by a footnote that references the FICC definition of the repo trade which requires that the same counterparties transact at the first and second-legs. 

\begin{quote}
(1) an agreement of a party to transfer Eligible Securities to another party in exchange for the receipt of cash, and the simultaneous agreement of the former party to later take back the sale Eligible Securities...\textbf{from the latter party} in exchange for payment of cash...(SEC \cite{SEC-UST-CCP-Rule-2023} Section II. 2(a) viii.)
\end{quote}

This rules out application to trades assigned to the end-nodes of chains, where the counterparties - in terms of the principal senders of money and the security - change at the second-leg. Finally, the rule applies only to transaction types that are currently cleared by FICC.

\begin{quote}
The definition of an eligible secondary market transaction, both as proposed and as
adopted, applies to all types of transactions that are of a type currently accepted for clearing at a U.S. Treasury securities CCA. It does not impose a requirement on a U.S. Treasury securities CCA to offer additional products for clearing. (SEC  \cite{SEC-UST-CCP-Rule-2023} Section II. 2(a) viii.)
\end{quote}

FICC does not clear multiparty contracts and therefore cannot clear transactions in RepoMech. The conclusion is that RepoMech is not affected by, and therefore is compatible with, the SEC's central clearing mandate for U.S. Treasuries repo trades. An implication of the compatibility is that an FICC GSD clearing member can operate under both regimes at the same time. It can centrally clear some of its repo trades while carrying out other trades under RepoMech.

\subsection{Profit, counterparty risk and cost}
\label{subsec: Profit, counterparty risk and cost}

Neither RepoMech or central clearing alter an agent's contractual profit. Where the two regimes differ are on counterparty risk and cost. Under RepoMech an agent's counterparty risk remains with its initial repo contract counterparties ( Proposition 1(2)). Under central clearing counterparty risk shifts to the CCP, which is the counterparty to all trades. The difference in counterparty risk has implications for incentives to enter into initial repo contracts and market risk which we do not explore here. 

Divergent costs can arise from the cost of operating the protocols and differences in institutional practice. RepoMech has not yet been implemented, but it is noteworthy that the underlying protocol is not computationally intensive (Theorem 1). FICC charges fees, imposes margin payments and maintains a default fund on its central clearing platform. A crucial difference between RepoMech and central clearing is that the latter injects a third party, the CCP, into the network of trades. Both the SEC and FICC project that the rule will increase trading costs in the Treasuries repo market. The SEC acknowledges that "additional clearing likely would result in increased margin contributions and clearing fees" (SEC  \cite{SEC-UST-CCP-Rule-2023} Section II.2(b)i). Reporting on the results of a survey of market participants it conducted, FICC "expects that the incremental indirect participant Treasury volume could result in a corresponding increase in Value at Risk (VaR) margin, which it conservatively estimates could be approximately \$26.6 billion across the FICC / GSD membership" (FICC \cite{DTCC2023}). Bank of New York Mellon, a major repo market participant, opines that the true increase in margin "could be higher because [the FICC survey] is based on a subset of market participants." and concludes.

\begin{quote}In repo markets, netting should reduce some balance sheet costs, but additional margin and the costs of sponsorship are likely to push bid-ask spreads wider, increasing the cost of repo funding and leverage. On net, in both Treasury cash and financing markets, liquidity in normal times is likely to be less continuous than many have come to expect. \citep{Wuerffel2024}
\end{quote}

In addition to the direct cost of increased margin payments, FICC central clearing prevents the use of margin posted for repo trades to be offset by margin on other types of trades between counterparties,  which is an indirect cost.\footnote{See Hempel et.al. \cite{Hempel2023} for a discussion of the common practice whereby hedge funds offset margin on Treasuries repo trades with margin on Treasuries futures trades with the same counterparty.}

\section{Conclusion}
\label{sec:conclusion-2}

In this paper we first described how changes to accounting rules and bank leverage regulations that were made after the 2008 financial crisis interacted to reduce the volume of US Treasuries repo that banks could intermediate. The updates to FASB accounting rules changed the increase in asset value recorded at the \fl from total intermediation margin to \fl sale price, which represented an order of magnitude increase. The SLR reduced the value of assets that a bank could record. The combination of the two changes lowered intermediation capacity in the US Treasuries repo market, which is alleged to have caused several major disruptions in recent years.  

We then adapted the trading mechanism in Aronoff et.al. \cite{Aronoff2025mech} to the \sel of a repo trade. We demonstrated how it could be applied to reduce the \bs impact of repo intermediation, specifically by lowering the increase in recorded asset value at the \fl, which in turn lowered the impact on the SLR denominator. 

Next, we compared our \rtm to \cc. We showed that the \bs reduction achieved by the \rtm is at least as large as \cc. The salient difference is that the \rtm does not alter the counterparty risk of the initial repo counterparties, whereas \cc concentrates risk on the CCP. We mentioned that \cc{'s} re-allocation of risk could, by altering incentives to monitor counterparty risk and adjust pricing and trade volume, lead to a less efficient outcome. However, we did not pursue this possibility, as it lies outside the scope of the present work. Finally, we demonstrated that the \rtm is bankruptcy remote, as is DVP repo, and that it is compatible with the SEC's \cc mandate for US Treasuries repo trades.

\small{\bibliographystyle{ACM-Reference-Format}
\bibliography{mybibliography}
}

\newpage
\appendix

\section{Repo Accounting Rules}
\label{app:Repo Accounting Rules}

In this appendix we discuss in detail the application of accounting rules to repo trades. We refer to the accounting standards of the Financial Accounting Standards Board ("FASB"), which are numerically organized in accord with the Accounting Standards Codification ("ASC") \citep{FASBASC} and updated by Accounting Standards Updates ("ASU"s)  \citep{FASBASU}. FASB standards are the primary source of Generally Accepted Accounting Principles ("GAAP"), which are the corporate accounting standards under U.S. law.\footnote{GAAP incorporates standard from other sources including the American Institute of Certified Public Accountants ("AICPA") and the Securities and Exchange Commission ("SEC").} We also refer to the commentary on FASB standards in the PWC Viewpoint online book \cite{PwCViewpoint}. 

Section \ref{subsec:Accounting rules for repo trades} explains the FASB criteria for classifying the first-leg of a repo trade as a final sale or a secured financing, the balance-sheet treatment under each regime and the ASU's that were issued after the 2008 financial crisis to ensure that all repo trades are treated as secured financings. Section \ref{subsec:Accounting treatment of RepoMech} explains why the replacement second-leg contracts of intermediate nodes on chains should be classified as derivative transactions and how that affects an agent's balance-sheet. Finally, Section \ref{subsec:Accounting treatment of centrally cleared repo} describes the accounting treatment of centrally cleared repo and compares it to the accounting treatment of the replacement second-leg contracts.   

\subsection{Accounting for repo trades}
\label{subsec:Accounting rules for repo trades}

\subsubsection{FASB rules}
\label{subsubsec:FASB rules}

The representation of a repo trade on an agent's balance sheet depends on whether the first-leg transaction is classified as a final sale or a secured borrowing. ASC 860-10-40-5 is a list of conditions that must be met in order for a transfer of a \scu to be classified as a final sale. The condition that is crucial for classifying a repo trade is (c)(1).

\begin{quote}
(c) Effective control. The transferor [does] not maintain effective control over the transferred \scus. A transferor's effective control over the transferred [\scus] includes...

\begin{adjustwidth}{2em}{0em}
(1) An agreement that both entitles and obligates the transferor to repurchase or redeem the transferred [\scu] 
\end{adjustwidth}
\end{quote}

ASC 860-10-40-24 lists the conditions that must be met to satisfy ASC 860-10-40-5 (c)(1). Prior to the post 2008  financial crisis reforms, the key provision was (b), which stated; 

\begin{quote}
The transferor is able to repurchase or redeem [the transferred \scus] on substantially the agreed terms, even in the event of default by the transferee. To be able to repurchase or redeem \scus on substantially the agreed terms, even in the event of a default by the transferee, a transferor must at all times during the contract term have obtained cash or other collateral [or ``\scu''] sufficient to fund substantially all of the cost of purchasing replacement \scus from others...
\end{quote}

The first-leg of a repo trade that failed to satisfy ASC 860-10-40-24(b) was a final sale. This provision is violated if the repurchase price of the \scu is below its market price. In the event of a default by the transferee, the transferor's repurchase price would be the market price, which exceeds the contract price. The transferee would not be able to repurchase the \scu  "on substantially the agreed terms". A repo with a haircut would typically violate ASC 860-10-40-24(b) since the excess \scu is priced at zero by definition. Lehman designed its Repo 105 trades to violate this condition (see Appendix \ref{subsec:Lehman's Repo 105}). In reaction to the controversial role of Repo 105 in the 2008 financial crisis, in 2011 ASC 860-10-40-24(b) was repealed by ASU 2011-3. 

The repeal of ASC 860-10-40-24(b) means that the first-leg of trades are no longer treated as final sales and the \scus remain on the balance-sheet of the initial owner.  However, this change to accounting rules did not alter the treatment of repo-to-maturity trades where the transferee does not repurchase the \scu. In repo-to-maturity, the second-leg occurs on the maturity date of the \scu and the payoff is made directly to the transferee (who is the owner of the \scu) with any amount in excess of the repurchase price sent by the transferee to the transferor. ASC 860-10-40-5 does not apply to such trades. MF Global used this exception to remove risk exposure to low-rated sovereign debt from its balance-sheet. In 2014 FASB mandated that repo-to-maturity trades are accounted as secured borrowing; ``A repurchase-to-maturity transaction shall be accounted for as a secured borrowing as if the transferor maintains effective control.'' (ASU 2014-11)

\subsubsection{Balance-sheet impact on repo intermediaries}
\label{subsubsec:Balance-sheet impact on repo intermediaries}

We now compare the first-leg balance-sheet impact of repo intermediation under final-sale and secured-financing accounting, with reference to the $BT$ nodes in Figure \ref{fig:First and Second-Leg Repo Chain}. \footnote{Salerno et.al. \cite{Salerno-Repo-Accounting} provides a balance-sheet example of the difference between final-sale and secured financing accounting for repo.}

\textbf{Final-sale accounting} The initial owner of $T$, $RM$, subtracts the carrying value of $T$ and adds the first-leg sale proceeds to its assets. Each $BT$ intermediary subtracts the first-leg purchase price and adds the first-leg sale price of $T$ to its assets.\footnote{$BT_{i}$ adds the value of $T$ from its purchase and subtracts it from its sale, which cancels out.} $MM$ adds the value of $T$ and subtracts the first-leg purchase price to its assets. The second-leg purchase contract (with its left neighbor) and sale contract (with its right neighbor) are forward contracts, which are treated as a derivative and recorded at fair-market-value ("FMV"), which can be positive or negative in accordance with ASC 815 (PWC Viewpoint: Derivatives and Hedging Guide Chapter 1.2.2  \cite{PWC-derivatives}). The calculation of FMV is the expected gap between the value of $T$ and its price. When repurchase price, $p^{2}$, is below market, $p_{T}$, the gap is positive and is recorded as an asset. When repurchase price is above market the gap is negative and is recorded as a liability.\footnote{FMV is adjusted by counterparty credit risk and value-at-risk (ASC 820 and 815)}. Figure \ref{fig:Balance-Sheet Impact of Repo Intermediation: Pre-Reform} displays the pre-reform \bs for intermediary $BT_{i}$ on the repo chain in Figure \ref{fig:First and Second-Leg Repo Chain}.\footnote{WLOG we assume that $BT_{i}$ carries $T$ on its \bs at the market price $p_{T}$.} For convenience we depict the FMV of the \sel purchase and sale as positive and therefore displayed as assets.

\begin{figure}[H]

\label{fig:Balance-Sheet Impact of Repo Intermediation: Pre-Reform}
\end{figure}

A key observation regarding balance-sheet impacts of final-sale repo accounting is that the \scu $T$ is recorded on the balance-sheet where it is owned. Prior to the trade it is on the balance-sheet of the initial owner $RM$. After the first-leg trade it is on the balance-sheet of its interim owner $MM$. Between the first and second-legs $T$ does not appear on the balance-sheet of its initial owner $RM$. This can be derived from Figure \ref{fig:Balance-Sheet Impact of Repo Intermediation: Pre-Reform} by eliminating the first-leg purchase and the second-leg sale. Finally, the $BT$ intermediaries, who purchase and then sell $T$, record only the intermediation margin on their balance-sheet because they are not the owners of the \scu at completion of the first-leg along the repo chain.

\textbf{Secured-financing accounting} ASC 860-30-25-2 states that a transfer of \scus that does not meet the conditions of a sale under ASC 860-10-40-5 should be accounted for as a secured borrowing. At a high level the balance-sheet impact of a repo trade conforms to the following (PwC Viewpoint: Transfers and Servicing of [\scus] guide Chapter 5.2  \cite{PWC-repo});

\begin{quote}
[U]nder the secured borrowing accounting model, the transferor:

\begin{adjustwidth}{2em}{0em}
\begin{itemize}
\item  Recognizes any cash received from the transferee (and any other [\scus] obtained from the transferee that the transferor can pledge or exchange, other than beneficial interests in the transferred [\scus])

\item Records an obligation (liability) to return the cash to the transferee (and any other recognized [\scus] obtained from the transferee)
\end{itemize}
\end{adjustwidth}

Under the secured borrowing accounting model, the transferee:

\begin{adjustwidth}{2em}{0em}
\begin{itemize}
\item Derecognizes any cash paid to the transferor

\item Records a receivable, representing its entitlement to receive at a later date the cash paid to the transferor

\item Does not record the [\scus] obtained from the transferor (barring a default by the transferor)
\end{itemize}
\end{adjustwidth}
\end{quote}

The implication is that the repo borrower (the ``transferor'') records the \fl cash inflow as a asset and the \sel repurchase price as a liability. The repo lender (the ``transferee'') does not record any change on its \bs.\footnote{A gap between the \fl and \sel price is recorded at the \sel.} A repo intermediary is both borrower and lender, with the exception that it does not record the \fa on its \bs. The \fl \bs impact is the same as for a repo borrower. Notably, the value of the \scu is not removed from the balance-sheet of the initial owner, and therefore does not appear on any other balance-sheets along the repo chain. Figure \ref{fig:Balance-Sheet Impact of Repo Intermediation: Post-Reform} depicts the balance-sheet impact of intermediating repo under secured financing accounting, where the intermediary is both transferor (seller) and transferee (buyer) at each leg.  

\begin{figure}[H]

\label{fig:Balance-Sheet Impact of Repo Intermediation: Post-Reform}
\end{figure}

\textbf{Comparison of final-sale and secured financing accounting} There are two salient differences in balance-sheet impacts between final-sale and secured financing accounting of repo trades. One is that the increase in recorded assets from intermediation is an order of magnitude higher under secured financing accounting. $\triangle A$ under final-sale accounting is the intermediation margin (Figure \ref{fig:Balance-Sheet Impact of Repo Intermediation: Pre-Reform}) versus the entire value of the \scu under secured financing accounting (Figure \ref{fig:Balance-Sheet Impact of Repo Intermediation: Post-Reform}). The other difference is that, under final-sale accounting, an initial owner can remove the \scu from its balance-sheet at the first-leg. Under secured financing accounting the initial owner cannot remove the \scu from its balance-sheet. It this feature that prevents an agent from using repo to disguise its ownership of select \scus.

One can take the view that secured-financing accounting does not represent the underlying economic substance of a repo trade for two reasons. One reason is that  ownership of the \scu is not recorded on the balance-sheet of its owner after the first-leg. The second reason is that de-coupling the second-leg repurchase obligation and sale price into separate liability and asset mis-represents the risk they embody. Specifically, the obligation to sell is contingent on the receipt of the \scu (that is the meaning of delivery-versus-payment). Therefore, the true risk is the gap between the two. The same argument applies in reverse to the sale transaction. Making these adjustments takes us back to final-sale accounting. This has an important implication for the interpretation of bank capital and leverage regulations. It implies that the recorded assets of financial intermediaries overstate the economic value of assets. 

On the other hand, secured-financing accounting prevents an agent from removing assets from it balance-sheet by acting as a repo borrower and making a first-leg sale. This is what enabled Lehman's Repo 105. An optimal solution would rule out this type of abuse without requiring \bs recording that mis-represents the true underlying risk of the repo trade. Chircop  \cite{Chircop2012}  addresses this dilemma and a proposes a modification to accounting rules that provide a solution.

\subsection{Accounting treatment of RepoMech}
\label{subsec:Accounting treatment of RepoMech}

RepoMech partitions an agent's initial repo trades into excess borrowing or lending and matched trades. The second-leg trades of the former are assigned to end-nodes of chains and the second-leg trades of the latter are assigned to intermediate nodes on chains or cycles. There is a fixed amount of the \scu assigned to each chain and cycle. Neighboring nodes on a chain or a cycle are counterparties under initial repo contracts and the flows of money and \scu between them are assigned from the flows in their initial repo contracts. The initial second-leg contracts are terminated and replaced by \mln second-leg contracts on each chain and cycle. A failure by a node to send the required financial object triggers a decomposition of the chain or cycle formed by pulling out the nonperforming node and the neighboring node to whom the unsent object flowed along the connecting edge; a 
termination of the \mln contract on the chain or cycle and its replacement by \mln contracts on the newly formed chains. This process occurs recursively for chains with more than two nodes until there is no nonperformance on any chain with more than two nodes. A chain with two nodes is a bilateral contract that cannot be further decomposed. A failure to perform under a bilateral contract is a final default. Each bilateral contract replicates the terms of the initial \sel contract between the counterparties, scaled to the volume of \fa assigned to the chain (Section \ref{sec:RepoMech}).

\subsubsection{Intermediate nodes on chains and nodes on cycles}

A key difference from an accounting standpoint between the \rtm and bilateral repo trades is the trades assigned to the middle of chains is that the \mln contract eliminates the obligation to repurchase the \fa or those trades. Additionally, since agents commit to the \rtm protocol before execution of their \fl trades, the netting is contractually binding at the \fl. These trades meet the requirements for de-recognition under ASC 860-10-40-5. The absence of \sel repurchase obligation implies the initial contract repo lender has lost ``effective control'' of the \fa. As a consequence, the \fl transactions connected with the trades are final-sales. The replacement \sel contracts are forward contract to repurchase at a later date. It meets the definition of a derivative instrument under ASC 815-10-15: it has an underlying (the fixed prices/interest rates on the Treasury repo), a notional amount ($T$ units of the Treasury), requires no new initial net investment (the initial exchange of cash for the security has already occurred in the first leg), and will be settled by a net payment in cash (PWC Viewpoint: US Derivatives \& hedging guide - Chapter 4: Embedded derivative instruments \cite{PWC-derivatives}). The \sel of these trades are recorded on the \bs at FMV, i.e. the net value of the trade, adjusted by counterparty risk. This is the same treatment as the pre-reform accounting and is displayed in Figure \ref{fig:Balance-Sheet Impact of Repo Intermediation: Pre-Reform}.

\begin{figure}[H]

\label{fig:First-Leg Balance-Sheet Impact for Intermediate Nodes: RepoMech}
\end{figure}

Finally, it should be noted that the property of the \rtm protocol that preserves counterparty risk does not affect the applicability of ASC 815-10-15. The criteria for treatment as a secured borrowing in ASC 860-10-40-5(c)(1) requires that the following holds. 

\begin{quote}
An agreement that both entitles and obligates the transferor to repurchase or redeem the transferred [\scus].
\end{quote}

For \sel trades assigned to a node in the middle of a chain or on a cycle,  the agent is not entitled to repurchase the [\scu] under the replacement \sel contract, nor is it obligated to do so. It can only recover the repurchase right and obligation in the initial \sel upon the contingent occurrence of a nonperformance by the agent or its neighboring node. ASC 860-10-40-25 clarifies that contingent obligations and rights do not qualify as effective control.

\begin{quote}
Transfers that include only the right to reacquire, at the option of the transferor or upon certain conditions, or only the obligation to reacquire, at the option of the transferee or upon certain conditions, may not maintain the transferor's control, because the option might not be exercised or the conditions might not occur. Similarly, expectations of reacquiring the same securities without any contractual commitments (for example, as in wash sales) provide no control over the transferred securities.    
\end{quote}

\subsubsection{End-nodes of chains}
\label{subsubsec:End-nodes of chains}

Trades that are assigned to end-nodes of chains are associated with repurchase obligations under the replacement \sel contracts that match the terms of the initial \sel contracts in terms of the inflow and outflow of financial objects. The salient difference between the initial repo trade and the trade under RepoMech is that the identity of the second-leg counterparty is not the same. The \sel contract is a \mpc. The question arises whether this affects the accounting of the end-node trades. Neither ASC 860-10-40-24 nor ASC 860-10-40-24(b) - which defines "effective control" - make an explicit statement about the identity of the \sel counterparty. However,  ASC 860-10-40-5 does make reference to a transferee that, at least implicitly assumes continuity of the same counterparty in the first and second-legs. 

It does not appear that FASB has explicitly addressed the case where the identity of the counterparty changes (or where multiparty contracts exist).\footnote{The authors are unable to find any references in the PWC Viewpoint book or the FASB Accounting Interpretations, which address the application of accounting rules in concrete circumstances.} On this basis we conclude, provisionally, that the initial repo trades that are placed at the end of chains remain repo trades that are treated as secure financings. Moreover, Section \ref{subsec:Replacement first-leg contracts} demonstrates that continuity of transferee can be established by novating and replacing initial repo \fl contracts. In that case the end-nodes have the same \mpc as counterparty at each leg.

\subsection{Accounting treatment of centrally cleared repo}
\label{subsec:Accounting treatment of centrally cleared repo}

Under central clearing an agent's initial  first-leg contracts are canceled and replaced by identical contracts with a central clearing counterparty ("CCP"). The replacement first-leg contracts are executed on a trade-by-trade basis. The second-leg flows of the \scu are partitioned into two groups; matched trades and excess (inflow or outflow) trades.\footnote{This is accomplished by the Parent to Child Node-Splitting Algorithm, albeit possibly with a different assignment criteria.} Each first-leg trade is assigned to the group its associated second-leg is assigned to. 

For matched-trades, the agent sends to or receives from the CCP its second-leg net money inflow, which is the same payoff as under RepoMech.  Consequently, these trades meet the requirements of ASC 860-10-40-5 and their first-legs are treated as final sales with balance-sheet impact as depicted in Figure \ref{fig:Balance-Sheet Impact of Repo Intermediation: Pre-Reform} with the  FMV composed of the net money inflow adjusted by counterparty credit risk.  

Excess repo trades satisfy the effective control definition of ASC 860-10-40-24 (they are standard repo trades) which means they do not satisfy the requirements for treatment as a final sale (ASC 860-10-40-5) and  are therefore treated as secured financings with balance-sheet impact as depicted in Figure \ref{fig:Balance-Sheet Impact of Repo Intermediation: Post-Reform}. 

The substitution of the CCP as counterparty to all agents restructures risk-bearing. Risk shifts from initial repo contract counterparties to the CCP and the CCP bears the risk of agent default. This is reflected in the counterparty credit risk on the balance-sheet. Agent net cash flow is the same as in the initial repo contracts, assuming no counterparty defaults in either case.

\section{Lehman and MF Global Repo Strategies}
\label{app: Lehman and MF Global Repo Strategies}

In this appendix we discuss the strategies employed by Lehman and MF Global to use repo transactions to conceal risk positions.

\subsection{Lehman's Repo 105}
\label{subsec:Lehman's Repo 105}

The investment bank Lehman Brothers devised a transaction structure, called Repo 105 (and a similar one called Repo 108), which it employed around financial disclosure dates for several years prior to its 2008 bankruptcy. The 105 (and 108) reference the percentage haircut (5\% and 8\%) on the first-leg sale price. This ensured that the second-leg repurchase price was below market. This enables Lehman to take the position that the repo trade did not qualify for treatment as secured financing because it did not satisfy pre-reform ASC 860-10-40-24(b). Namely, the financial asset could not be purchased in the market at or below the contract second-leg repurchase price. The maneuver involved an arbitrage between US and UK accounting rules. Lehman executed the repo transaction in the UK through an offshore subsidiary. The first leg was treated as a final sale under UK law. The reporting of the final sale was consolidated up to the US holding company balance sheet. Lehman used its holdings of subprime mortgage securities as collateral and applied the proceeds of the first leg sale to pay down debt. This enabled Lehman to simultaneously conceal its subprime exposure and to understate its indebtedness. Between the first-leg and the second-leg, Lehman's subprime securities were owned by its counterparty. They were removed from Lehman's balance-sheet during that time interval without the repurchase obligation recorded as a debt. Meanwhile, Lehman received a payment of $M$ at the first leg, which it used to pay down lines of credit. As a consequence, by timing Repo 105 so that its quarterly release of financial information occurred between the first and second-leg, Lehman was able to report a balance-sheet with a lower volume of subprime securities holdings and less borrowing than would be the case after the second-leg (which it had a legal obligation to complete). It is estimated that, for several years prior to its bankruptcy, Lehman's use of Repo 105 enabled it to under-report its holdings of subprime securities and its debt by approximately \$50 billion.\footnote{See Chang et.al  \cite{Chang-Repo105} and Hartwell  \cite{Hartwell-repo-105} for a technical description of how Repo 105 worked, and Pounder \cite{Pounder2011} for an explanation of how Lehman used repo 105 (and 108) to alter the balance-sheet it presented to regulators and investors and the response of regulators.}

\subsection{MF Global's Repo-to-Maturity Program}

At the second-leg of a repo-to-maturity transaction, the repo lender collects the payoff from the collateral security issuer on the maturity date in lieu of the repo borrower repurchasing the collateral. The borrower is only obligated to make a payment in the event the security issuer defaults on its obligation to repay at maturity. Under the pre-reform FASB repo accounting rules, the first-leg transaction was treated as a final sale. The second-leg obligations were not recorded on the balance-sheet. For two years prior to its 2011 bankruptcy, MF Global entered into repo-to-maturity transactions to conceal from investors and regulators its exposure to low-rated sovereign debt. An example of the strategy worked as follows. MF Global draws on a line of credit to purchase a risky high yielding Greek sovereign bond for a price of $M$. The purchase price of the bond is below its payoff at maturity, which is $M + \xi$ (we normalize the interest rate to zero for simplicity). The discount on the purchase price is $\xi$. Shortly thereafter, MF Global enters into a repo transaction in which is sells the Greek bond in the first leg for a price of $M + \frac{1}{2}\xi$ and sets the second leg date to match the maturity date of the Greek bond. MF Global is able to sell the bond for a higher price than it paid because it protects its counterparty against default risk by agreeing to periodically pay into a margin account sufficient $M$ to cover the implied risk of loss reflected in the cost of credit default swaps linked to the Greek bonds. MF Global uses the first leg sale proceeds to pay down the line of credit it used to acquire the Greek bonds and retains a profit equal to a portion of the discount on its purchase, $\frac{1}{2}\xi$. The repo-to-maturity accounting exempts MF Global from recording the second leg transaction on its balance sheet. In addition, since the margining obligation is off-balance sheet, it does not get reported. The repo-to-maturity strategy unraveled in 2011 when the implied risk on Greek bonds increased, which required MF Global to make large margin payments. The need for cash ultimately resulted in the misappropriation of millions of dollars from customer accounts. Soon afterward the company filed for bankruptcy.\footnote{For description of MF Global's repo-to-maturity trades see Hartwell \cite{Hartwell-repo-105}}

\section{Note: concavity of hedge fund borrowing demand and MMF repo supply}
\footnote{\citep{Aronoffbilateralrepo}}
\label{app:Concavity}

This appendix provides microfoundations under which (i) hedge fund borrowing demand and (ii) money market fund (MMF) repo supply are concave in the relevant pricing margin. The primitives mirror the frictions documented empirically and used theoretically in the cited literature—risk aversion and funding/margin constraints for leveraged investors \citep{HeNagelSong2022JFE,BanegasMonin2023FEDS,GarleanuPedersen2011RFS} and concentration costs plus an outside option (the ON RRP floor) for MMFs \citep{Huber2023JFE,HempelMonin2023FEDS}.

\paragraph{Hedge funds (borrowing demand as a function of the repo rate).}
Let a hedge fund choose leverage $L\ge 0$ on an arbitrage (or carry) opportunity with expected excess return $\alpha$ over the secured financing rate, financed at repo rate $r$. The fund is risk‐averse with mean–variance preferences and faces convex funding/liquidity costs that increase with scale (capturing balance–sheet usage, margin frictions, and liquidity risk), consistent with \citep{HeNagelSong2022JFE,GarleanuPedersen2011RFS}. Consider the static problem
\[
\max_{S\ge 0}\;\; \underbrace{S(\alpha - r)}_{\text{gross carry}} \;-\; \underbrace{\tfrac12(\gamma\sigma^2 + k)\,S^2}_{\text{risk + quadratic funding cost}} \;-\; \underbrace{\tfrac{m}{3}\,S^3}_{\text{scale/liquidity convexity}},
\]
with $\gamma>0$, $\sigma^2>0$, $k\ge 0$, $m>0$. The FOC for interior $L$ is
\[
\alpha - r \;=\; (\gamma\sigma^2 + k)\,S \;+\; m\,S^2 \;\equiv\; A\, S + m L^2,\quad A:=\gamma\sigma^2+k.
\]
Solving for the Treasury suply $S(r)$ gives the positive root
\[
S(r)\;=\;\frac{-A + \sqrt{A^2 + 4m(\alpha - r)}}{2m}\qquad \text{for } r<\alpha,
\]
and $S(r)=0$ otherwise. Differentiating,
\[
\frac{dS}{dr}\;=\;-\frac{1}{A+2mS}\;<\;0,\qquad
\frac{d^2S}{dr^2}\;=\;-\frac{2m}{(A+2mS)^3}\;<\;0.
\]
Hence $S(r)$ is decreasing and concave in $r$ on $( -\infty,\alpha )$. This aligns with the empirical evidence that higher funding costs or margins sharply reduce leverage (e.g., a 200bp minimum haircut reducing effective leverage from $56\times$ to $25\times$ \citep{BanegasMonin2023FEDS}) and with margin–based asset–demand theories in which binding constraints flatten supply at the margin \citep{GarleanuPedersen2011RFS}. Moreover, when haircuts or margins introduce a binding constraint, $S(r)$ becomes piecewise defined with a kink (as in \citep{HeNagelSong2022JFE}), reinforcing curvature.

\paragraph{MMFs (demand for Treasuries from a given dealer as a function of the dealer’s offered rate).}
Fix a single dealer–MMF relationship (the per–dealer schedule relevant for dealer market power). Let the MMF choose exposure $D\ge 0$ to this dealer at offered rate $r$, with an outside option $r_{0}$ (e.g., the ON RRP floor) that creates a rate floor/kink \citep{HempelMonin2023FEDS}. Following the structural evidence on aversion to portfolio concentration and preference for stable lending, \citep{Huber2023JFE}, model concentration costs by a convex function $C(x)=\tfrac{a}{2}x^2+\tfrac{b}{3}x^3$ with $a>0$, $b>0$. The MMF solves
\[
\max_{D\ge 0}\;\; (r-r_{0})\,x \;-\; \frac{a}{2}x^2 \;-\; \frac{b}{3}x^3,
\]
implying the FOC (for $r\ge r_{0}$)
\[
r-r \;=\; aD + b D^2.
\]
The per–dealer supply function is the positive root
\[
D(r)\;=\;\frac{-a + \sqrt{a^2+4b(r-r_{0})}}{2b}\qquad \text{for } r\ge r,
\]
and $D(r)=0$ for $r<r$. By differentiation,
\[
\frac{dD}{dr}\;=\;\frac{1}{a+2b\,D}\;>\;0,\qquad
\frac{d^2D}{dr^2}\;=\;-\frac{2b}{(a+2b\,D)^3}\;<\;0.
\]
Thus, above the policy floor $r$ the MMF’s per–dealer demand is increasing and concave in the offered rate, with a kink at $r$ consistent with the ON RRP floor’s role in shaping the lending schedule \citep{HempelMonin2023FEDS}. Concavity captures the diminishing marginal willingness to concentrate exposure in a single counterparty documented in Huber \cite{Huber2023JFE}.

\paragraph{Discussion.}
Taken together, these primitives deliver concave Treasury supply on the leveraged–investor, $RM$, side (due to risk aversion and convex funding/liquidity costs) and concave per–dealer demand on the MMF, $MM$, side (due to concentration costs and an outside option). Both features are consistent with the empirical patterns reported in \citep{HeNagelSong2022JFE,BanegasMonin2023FEDS,Huber2023JFE,HempelMonin2023FEDS} and provide a parsimonious foundation for the curvature assumed in the main text.

\end{document}